\pgfplotsset{compat=newest}
\pgfplotsset{compat=newest}
\definecolor{light-gray}{gray}{0.90}
\definecolor{MyBlue}{rgb}{0.00,0.25,0.5}
\colorlet{Shade}{light-gray}
\newcommand\souffle{\textsc{Souffl\'e}\xspace}
\newtheorem{theorem}{Theorem}
\newtheorem{definition}{Definition}
\newtheorem{lemma}{Lemma}
\newtheorem{example}{Example}
\newcommand\com{,}
\newcommand\bigo{\mathcal O}
\definecolor{myblue}{RGB}{80,80,160}
\definecolor{mygreen}{RGB}{80,160,80}
\definecolor{myred}{RGB}{160,80,80}
\newcommand\rdom{\mathbb D}
\newcommand\nr{n}
\newcommand\tl{m}
\newcommand\aset{A_R}
\newcommand\rel{R}
\newcommand\euler{\mathsf{e}}
\newcommand\perm{\textsf{Pm}}
\newcommand\rangequery{\rho}
\newcommand\searchquery{\sigma}
\newcommand\lex{\ell} 
\newcommand\lexset{L} 
\newcommand\search{\mathit{s}} 
\newcommand\searchset{S} 
\begin{document}

\sloppy

\title{Optimal On The Fly Index Selection in Polynomial Time}
\subtitle{(Extended Report)}
\numberofauthors{3}
\author{\alignauthor Herbert Jordan \\ \affaddr{University of Innsbruck, Austria}
        \alignauthor Bernhard Scholz \\ \affaddr{University of Sydney, Australia} 
        \alignauthor Pavle Suboti\'{c} \\ \affaddr{University College London, United Kingdom}}

\maketitle

\begin{abstract}
The index selection problem (ISP) is an important problem for accelerating the execution 
of relational queries, and it has received a lot of attention as a combinatorial knapsack 
problem in the past. Various solutions to this very hard problem have been provided. 
In contrast to existing literature, we change the underlying assumptions of the
problem definition:  we  adapt the problem for systems that store relations 
in memory, and use complex specification languages, e.g., Datalog. In our framework, 
we decompose complex queries into \emph{primitive searches} that select tuples in a 
relation for which an equality predicate holds. A primitive search can be accelerated 
by an index exhibiting a worst-case run-time complexity of log-linear time in the size of the 
output result of the primitive search. However,
the overheads associated with maintaining indexes are very costly in terms of memory and 
computing time. 

In this work, we present an optimal polynomial-time algorithm that finds the minimal 
set of indexes of a relation for a given set of primitive searches. An index may cover more 
than one primitive search due to the algebraic properties of the search predicate, 
which is a  conjunction of equalities over the attributes of a relation. The index search 
space exhibits a complexity of $\bigo(2^{m^m})$ where  $m$ is the number of attributes
 in a relation, and, hence  brute-force algorithms searching for solutions in the index domain  are infeasible. 
 As a scaffolding for designing a polynomial-time algorithm,  we build a partial order on search operations and use a constructive version 
of Dilworth's theorem. We show a strong relationship between chains of primitive searches 
(forming a partial order) and indexes. We demonstrate the effectiveness and efficiency of our algorithm for an in-memory 
Datalog compiler that is able to process relations with billions of entries in memory.
\end{abstract}

\section{Introduction}
There has been a resurgence in the use of Datalog in several computer science communities including program 
analysis where it is used as a domain specific language (DSL) for succinctly specifying various classes of 
static analyses. In this setup, an input program is converted into an \emph{extensional database} (EDB) and 
the static analysis specification is encoded as an \emph{intensional database} (IDB). Such use cases, unlike 
traditional database queries, typically consist of hundreds of relations and hundreds of deeply nested 
rules~\cite{Doop}, and result in giga-tuple sized relations~\cite{CAV}. Consequently, several high performance 
Datalog engines~\cite{CAV,Hoder11,logicblox,Socialite,Bierhoff05} have been employed for 
performing such computations. These engines use Datalog purely as a computational vehicle 
and use bottom-up evaluation techniques that usually involve some degree of 
compilation~\cite{Socialite, CAV, Coral, GlueNail}. Moreover, to further improve performance, relations are stored as 
indexed-organized tables in-memory, hence, enabling improved cache behaviour, lookup complexity, etc. As a result of 
these design characteristics, such engines require atypical index selection techniques that result in improved 
run-time and memory consumption while not resulting in noticeable compilation overhead.

Traditionally relational database management systems have solved
the index selection problem (ISP)~\cite{Schkolnick1975,Comer1978,Ip, Genetic}
by variants of the $0$-$1$ knapsack problem. Such formulations implicitly solve two subproblems of 
query optimization, literal scheduling along side index selection. While these techniques such as~\cite{Chau97} 
are well established in the relational databases literature, they are too computationally expensive for 
large Datalog programs and are rarely used in most Datalog engines. Too simplify this task, modern Datalog engines 
often require users to provide annotations to guide the engine in the choice of indices~\cite{Socialite, Doop}. Again, for large 
Datalog programs such approaches are very cumbersome as they put the entire optimization burden on the
user, often resulting in painstaking trial and error process that results in far from optimal index performance. In this 
paper, we present a practical yet powerful middle ground approach: we solve the index selection optimally,
while leaving join scheduling to minimal user annotations (usually on a small set of rules) or by automated means~\cite{Selingers}. In 
particular, our method is ideal for compilation based engines as the index selection is performed \emph{on the fly}, and results in 
negligible compilation time overheads while considerably boosting performance.

Our index selection approach takes 
into account several factors  
present in high-performance Datalog engines: First, in these engines, relations are not normalized and a very 
large number of indices may occur making the problem combinatorially intractable. Second, an index represents a 
\emph{whole} relation and hence there is no need top capture maintenance costs in the ISP 
formulation as it can be assumed that all indexed relations have a uniform cost. Third, high-performance Datalog 
engines translate Datalog queries into intermediate representations~\cite{Socialite, CAV, Coral, GlueNail} that 
assume a fixed join order~\cite{Socialite, CAV} and are decomposed into a simpler relational algebra operators that 
operate on a single relation that we refer to \emph{primitive searches}. Therefore, ISP is computed for a single 
relation only.

As a consequence of the new assumptions, the index selection problem is reduced to the problem of 
minimizing the number of indices for each relation separately. One option is to formulate the problem as a 
\emph{Minimum Set Cover Problem} (MSCP), however, such formulations do not give us tractability, and MSCP is a too 
coarse as a combinatorial vehicle. On the surface it appears that finding a tractable solution is futile, considering 
the vast search space of all possible indices of a single relation (i.e., space of ordered attribute subsets). However, 
as indices need to be computed on the fly during compilation/interpretation time and by our assumption that each 
primitive search requires at least one index, the ISP problem can be reformulated into a covering problem, i.e.,  each 
primitive search needs an index cover. This new formulation, while more fitting to the problem at hand, also reveals deeper 
mathematical structures that 
allows us to solve the problem in polynomial-time. As a result, we are able to perform index minimization with 
negligible overhead and obtain significant speedups with minimal 
consumption.

The key to our index minimization approach is the formulation of a relationship between the space of indices and 
\emph{search chains}. In the space of search chains we are able to leverage existing combinatorial results i.e., 
Dilworth's theorem~\cite{Dilworth50} that provide polynomial-time 
algorithms to find an optimal solutions. Using our established relationship, we are able to 
convert the optimal search chain solution to an optimal set of indices that we use to construct 
indexed joins, refered to as \emph{range nested loop joins}. We further clarify our method by the motivating 
example below:

\begin{example}[Motivating]
\label{ex:motivating}
Assume we have a 
non-recursive Datalog rule that has 
a single ternary input relation $A(x,y,z)$ and a ternary output relation $B(x,y,z)$, where x, y and z are 
attributes. 
\begin{align*}
B(r,p,q) \leftarrow & A(r,p,q), A(q,\_,\_), \\ & A(p,q,\_), A(p,\_,q), A(q,p,r).
\end{align*}

This query is transformed to a \emph{nested loop join} version of the query as depicted below. The details of this transformation are 
explained in Section~\ref{sec:integrate}.

\begin{tabbing}
\hspace{0.9cm}\=\hspace{0.1cm}\=\hspace{0.1cm}\=\hspace{0.1cm}\=\hspace{0.1cm}\=\hspace{0.1cm}\=\hspace{0.1cm}\=\hspace{0.1cm}\=\hspace{0.1cm}\=\hspace{0.1cm}\=\kill 
$\textit{loop}_1$: \> $\textbf{for all} \, t_1 \in A  \, \textbf{do}$ \\
$\textit{loop}_2$: \>\> $\textbf{for all} \, t_2 \in \searchquery_{x = t_1(z)}(A) \, \textbf{do}$ \\
$\textit{loop}_3$: \>\>\> $\textbf{for all} \, t_3 \in \searchquery_{x = t_1(y), y = t_1(z)}(A)  \, \textbf{do}$ \\
$\textit{loop}_4$: \>\>\>\> $\textbf{for all} \, t_4 \in \searchquery_{x = t_1(y), z = t_1(z)}(A) \, \textbf{do}$ \\
$\textit{loop}_5$: \>\>\>\>\> $\textbf{for all} \, t_5 \in \searchquery_{x = t_1(z), y = t_1(y), z = t_1(x)}(A) \, \textbf{do}$ \\
\>\>\>\>\>\>\textbf{if} $ t_1 \not \in B$ \textbf{ then} \\
\>\>\>\>\>\>\>\textbf{add } $t_1$ \textbf{ to } $B$\\
\>\>\>\>\>\>\textbf{endif}\\
\>\>\>\>\>\textbf{endfor}\\
\>\>\>$\ldots$\\
\>\textbf{endfor}
\end{tabbing}

From the nested loop join we extract primitive searches denoted by $\searchquery_{\varphi}$ where $\varphi$ is an equality 
predicate between attributes and contants bound by tuple values obtained further above the loop. To improve performance we use 
indices which we abstract as $\ell$. A na\"ive approach assigns each query an index which we represent by a 
lexicographical order, i.e., $\sqsubseteq_\ell$ where $\ell$ is a sequence of attributes $ \ell = x \prec \dots \prec z$ that 
implements a lexicographical order a binary search tree. $c_x$ denotes a constant for attribute $x$ obtained from 
a tuple $t_1$, i.e., $t_1(x)$. We assign four indices to the primitive searches as shown the table below:
 
\begin{center}
\footnotesize
  \begin{tabular}{l|l|l}
  & \multicolumn{2}{c}{} \\
Query component & Primitive Search &   $\ell$ \\
  \hline 
$A(q, \_, \_)$ & $\searchquery_{x = t_1(z)}(A)$                      &  $x$ \\
$A(p,q,\_)$    & $\searchquery_{x = t_1(y), t_1(z)}(A)$             &  $x \prec y$\\
$A(p,\_,q)$    & $\searchquery_{x = t_1(y), z = t_1(z)}(A)$             &  $x \prec z$\\
$A(q,p,r)$     & $\searchquery_{x = t_1(z), y = t_1(y), z = t_1(x)}(A)$    &  $x \prec y \prec z$ \\ 
 \end{tabular}
\end{center}

While this assignment of indices will speed up the join computation, it is not the most optimal assignment.
The research question is therefore, how can we assign indices to primitive searches in the most optimal 
way.

\end{example}

To demonstrate the practicality of our technique, we have implemented the techniques discussed in this 
paper in an open source, high-performance Datalog engine called \souffle~\cite{CAV} 
that is used for large-scale static program 
analyses hundreds of rules and relations and processing up to giga-tuples of data.

We have performed experiments on a wide variety of program analysis specifications~\cite{Doop} and a diverse 
set of datasets that include the OpenJDK\footnote{Available from \url{http://www.oracle.com/technetwork/oracle-labs/datasets/overview/index.html}}, a large industrial benchmark from Oracle. Our technique results in considerable improvements compared to several 
alternative indexing schemes. Our experiments suggest that our approach gives considerable run-time and memory usage improvements
compared to \souffle's other indexing schemes. Moreover, with our technique \souffle is able to analyze problems typically deemed 
too difficult for Datalog-based tools on a par with the state-of-the-art hand crafted analyzer presented in~\cite{OOPSLA15}.

Our contributions are summarized as follows: 
\begin{itemize}
\item We describe a join mechanism that allows for optimal index selection
\item We formally define the minimal index selection problem (MOSP), including its state space.
\item We introduce a novel polynomial time algorithm to solve MOSP via search chains
\item We present a case study implementing MOSP in \souffle, an open-source Datalog engine for large 
scale program analysis. We demonstrate the effectiveness of MOSP in \souffle with large input 
instances and several real world analyses.
\end{itemize}

The paper is organized as follows:
We begin with preliminary definitions in Section~\ref{sec:prelims}. In Section~\ref{sec:cover} we give an overview of the 
join computation we perform. Section~\ref{sec:mosp} formally 
states the MOSP problem and analyses the search space of MOSP, finally leading to an optimal algorithm for solving MOSP.
In Section~\ref{sec:integrate} we discuss how our algorithm can potentially be integrated into databases and other query engines.
In Section~\ref{sec:experiments} we introduce a case study, where we implement our approach in a Datalog-based program analysis engine and 
evaluate our algorithm improvements observed using our technique. We highlight related work in Section~\ref{sec:relatedwork} and draw 
relevant conclusions in Section~\ref{sec:conclusion}.

\section{Preliminaries} 
\label{sec:prelims}
%
A \emph{power set}
of set $X$ is the set of all subsets of $X$ and is denoted by $2^X
$. 
The \emph{cartesian product} of two sets $X$ and $Y$ is the set of all
pairs $(a,b)$ such that $a \in X$, and $b \in Y$, and is denoted by $X
\times Y$. The cardinality of the Cartesian product is $|X| \cdot
|Y|$. The \emph{finite n-ary cartesian} product is written as $X_1
\times \ldots \times X_k = \{(x_1, \ldots, x_k) : x_i \in X_i \}$
where elements are referred to as \emph{tuples} and are defined as nested
ordered pairs, i.e., $(X_1 \times \ldots \times X_{n-1} ) \times X_n$.
The \emph{permutations} of set $X=\{a_1, \ldots, a_k\}$ is the set of
all possible sequences formed by elements of set $X$ such that each
element occurs exactly once. The cardinality of $\perm(X)$ is given by the factorial of the cardinality
of set $X$, i.e., $|\perm(X)| = |X|!$. We define a sequence as $a_1 \prec a_2 \prec  \ldots \prec a_k$ where $\prec$ 
denotes a chaining of elements to form a sequence. 

A \emph{relation} $\rel \subseteq \rdom_1
\times \ldots \times \rdom_{\tl}$ is a set of tuples $\{ t_1,
\ldots, t_\nr\}$ where $\nr$ is the number of tuples in the relation,
$\tl$ is the \emph{tuple length}, and $\rdom_i$ are the \emph{domains}
of the relation. A tuple $t$ is a fixed-length vector $\langle e_1,
e_2, ..., e_\tl \rangle$ whose elements $e_i$ are elements of domain
$\rdom_i$, for all $i$, $1\leq i \leq \tl$. 

A \emph{named relation} is a relation $\rel$ that uses \emph{attributes} 
to refer to specific element positions. The set of attributes  
$\aset=\{x_1, \ldots, x_\tl\}$ are $\tl$ distinct symbols and we
write $\rel (x_1, \ldots,$ $ x_\tl)$ to associate symbol $x_i$ to the
$i$-th position in the tuple. The elements of tuple $t=\langle e_1,
\ldots, e_\tl \rangle$ can be accessed by \emph{access function}
$t(x_i)$, which maps tuple $t$ to element $e_i$. E.g., given relation
$\rel(x, y, z)$ and a tuple $t =\langle e_1, e_2, e_3 \rangle \in
\rel$, the access function is $\{t(x) \mapsto e_1$, $t(y) \mapsto e_2$
and $t(z) \mapsto e_3 \}$. 


\newcommand\crel{\not \rel} A \emph{binary relation} $\rel \subseteq
\rdom \times \rdom$ is a set of ordered pairs. Two elements $a$ and
$b$ are related in $\rel$ written as $a \rel b$, if there is a pair
$(a,b) \in \rel$; two elements $a$ and $b$ are unrelated written as $a
\crel b$, if $(a,b) \not \in \rel$.  A relation is \emph{reflexive},
if $a \rel a$ for all elements $a$; \emph{symmetric}, if $a \rel b$
implies $b \rel a$ for all $(a,b) \in \rdom \times \rdom$;
\emph{asymmetric} if $a \rel b$ and $b \rel a$ implies $ a = b$,
\emph{transitive} if $a \rel b$ and $ b \rel c$ implies $a \rel c$, and
\emph{total} if $a \rel b$ or $b \rel a$ for all $(a,b) \in \rdom
\times \rdom$.

A binary relation $\leq$ is a \emph{pre-order} if the relation is
reflexive and transitive, a \emph{partial order} if the relation is
reflexive, asymmetric, and transitive, and a \emph{total order}\footnote{Sometimes a total order is also referred
  to as linear order, simple order, or (non-strict) ordering.} if the
relation is a partial order and is total. 

A lexicographical order $\sqsubseteq_\ell  {\mathcal D} \times {\mathcal D}$  is a total order defined 
over the domain of a relation where ${\mathcal D} = \rdom_1 \times \ldots \times \rdom_m$ 
is a finite n-ary cartesian product of the element domains, and the sequence $\ell \in \lexset$ 
is formed by a subset of attributes where each attribute occurs at most once
in the sequence.

\section{Computing Indexed Joins}
\label{sec:cover}
A major performance consideration in a Datalog engine is how join 
computations are performed. A join in traditional databases is computed by converting a Datalog 
rule to a \emph{primitive nested loop join}. The na\"ive assumption is that there is no underlying 
tuple order inside the nested loop join resulting in linear search time complexity.  

Primitive nested loop joins are defined in Fig.~\ref{fig:prim-nested loop join}. We refer to the head 
atom of a Datalog clause as $R_{0}$,
and each body atom as $R_{i}$ where $i > 0$. We partition the sequence of body atoms at a position 
index $k$ into positive and negative occurrences (i.e., negative if it is negated in the body), and 
denote positive atoms as $R_i^{+}$ where $0 < i \leq k$ and negative as $R_i^{-}$ where $i > k$, 
where $i$ denotes a position in the body.

In the primitive nested loop join we iterate (denoted by the \textbf{for all} construct) 
over tuples. The tuples are obtained from a filter called a \emph{primitive search} defined in 
Def.~\ref{def:ps} for positive relations. This comes from the implicit universal quantification in a 
Datalog clause. A primitive search extracts all tuples from a relation that adhere to a 
\emph{primitive search predicate}, i.e., a predicate limited to equalities of left-hand-side attributes and 
right-hand-side constants bound to tuples further up the nested loop join. Negative occurring atoms are tested for 
emptiness w.r.t. a primitive search on already stable relations. This semantics stems from the implicit 
non-existence quantification on attributes of negative body literals in a Datalog rule. The most inner 
operation in a nested loop join projects ($\pi$) the selected tuple into the head atom if the tuple does not already 
exist in the relation. This existence check is performed to ensure that tuples are not inserted twice into 
a relation, i.e., it enforces the set constraints for relational tables. At the primitive program level, several 
optimizations can now be performed. For example, the join can be marked for parallelisation directives, loops which 
have primitive searches subsumed by another loop can be coalesced into a single loop, loops can be 
pushed to the most outer possible layer (know as hoisting/layering)~\cite{CC}.

\begin{figure}[ht]
\begin{tabbing}
\hspace{1.5cm}\=\hspace{0.3cm}\=\hspace{0.3cm}\=\hspace{0.3cm}\=\hspace{0.3cm}\=\hspace{0.3cm}\=\hspace{0.3cm}\=\hspace{0.3cm}\=\kill 
$\textit{loop}_1$: \> $\textbf{for all} \, t_1 \in \rel^{+}_{1} \, \textbf{do}$ \\
$\textit{loop}_2$: \>\> $\textbf{for all} \, t_2 \in \sigma_{\varphi_2(t_1, t_2)}(\rel^{+}_{2})$ : \textbf{do} \\
\>\>\> $\ldots$\\
$\textit{loop}_k$: \>\>\>\> \textbf{for all} $t_k \in \sigma_{\varphi_k(t_1, t_2, \ldots, t_k)}(\rel^{+}_{k})$ :  \textbf{ do}\\ 
\>\>\>\>\>\textbf{if} $\sigma_{\varphi_{k+1}(t_1, t_2, \ldots, t_k)}(\rel^{-}_{k+1}) = \emptyset$ \textbf{ then} \\
\>\>\>\>\> $\ldots$\\
\>\>\>\>\>\>\textbf{if} $\sigma_{\varphi_{n}(t_1, t_2, \ldots, t_k)}(\rel^{-}_{n}) = \emptyset$ \textbf{ then} \\
\>\>\>\>\>\>\>\textbf{if} $\pi(t_1, \ldots, t_k) \not \in R_{0}$ \textbf{ then} \\
\>\>\>\>\>\>\>\>\textbf{add } $\pi(t_1, \ldots, t_k)$ \textbf{ to } $R_{0}$\\
\>\>\>\>\>\>\>\textbf{endif}\\
\>\>\>\>\> $\ldots$\\
\>\>\>\>\>\>\textbf{endif}\\
\>\>\>\>\textbf{endfor}\\
\>\>\> $\ldots$\\
\>\>\textbf{endfor}\\
\>\textbf{endfor}
\end{tabbing}
\caption{Nested-Loop Join: a relational algebra query is translated by a query planner to nested loop join; each 
loop enumerates tuples of a relation and filters selected tuples $t_1, \ldots, t_k$}
\label{fig:prim-nested loop join}
\end{figure}

\begin{definition}[Primitive Search]
\label{def:ps}
A primitive search has the following form: 
$$\searchquery_{x_1=v_1, \ldots, x_k=v_k} (R_{i}) \equiv \{ t \in R_{i} \mid t(x_1)=v_1 \wedge \ldots t(x_k) = v_k \}$$ where $R_{i}$ is a relation and $x_1=v_1, \ldots, x_k=v_k$ is the search predicate of the
relation where  $x_1, \ldots, x_k$ are variables (also known as attributes) of the
relation and $v_1, \ldots, v_k$ are either constants or values obtained from other 
tuple elements in relations $R_{j}$, $0 < j < i$. As an alternative notation, 
we denote $\searchquery_{{\varphi}(t_1, \dots, t_k)}$ where $\varphi \equiv x_1=v_1, \ldots, x_k=v_k$ as 
the substitution of $t_1$ to $t_k$ for appropriate constants $v_1$ to $v_k$.
\end{definition}

To improve the join computation performance we emply indices to each primitive search. Our technique rests on the 
assumption that all primitive searches benifit from being indexed. We refer to this assumption as the 
\emph{Minimal Index Assumption} (MIA). the benifit of indices is that they introduce orders 
on tuples in relations so that tuple lookups can be performed efficiently using some notion of a balanced search 
tree, in which tuples can be found in logarithmic-time rather than in linear-time. To create an order
among tuples in a relation, tuples must be made comparable. Since a tuple may have several elements, 
an order is imposed by element-wise comparison using a permutation over a subset of attributes, i.e., if the
first elements produces a tie, the second elements are used and so forth. This comparison is also known as 
a \emph{lexicographical order}. We abstract away the underlying implementation 
details of an index with a attribute sequence $\ell$. 

\begin{figure}[ht]
\begin{tabbing}
\hspace{1.2cm}\=\hspace{0.3cm}\=\hspace{0.3cm}\=\hspace{0.3cm}\=\hspace{0.3cm}\=\hspace{0.3cm}\=\hspace{0.3cm}\=\hspace{0.3cm}\=\kill 
$\textit{loop}_1$: \> $\textbf{for all} \, t_1 \in \rel_{1}^{+} \, \textbf{do}$ \\
$\textit{loop}_2$: \>\> $\textbf{for all} \, t_2 \in \sigma_{\rho_2(\ell_2, a_2, b_2)}(\rel_{2}^{+}) \, \textbf{do}$ \\
\>\>\> $\ldots$\\
$\textit{loop}_k$: \>\>\>\> \textbf{for all} $t_k \in \sigma_{\rho_k(\ell_k, a_k, b_k)}(\rel_{k}^{+})$  \textbf{ do}\\ 
\>\>\>\>\>\textbf{if} $\sigma_{\rho_{k+1}(\ell_{k+1}, a_{k+1}, b_{k+1})}(\rel_{k+1}^{-}) = \emptyset$ \textbf{ then} \\
\>\>\>\>\> $\ldots$\\
\>\>\>\>\>\>\textbf{if} $\sigma_{\rho_n(\ell_n, a_n, b_n)}(\rel_{n}^{-}) = \emptyset$ \textbf{ then} \\
\>\>\>\>\>\>\>\textbf{if} $\pi(t_1, \ldots, t_k) \not \in R_{0}$ \textbf{ then} \\
\>\>\>\>\>\>\>\>\textbf{add } $\pi(t_1, \ldots, t_k)$ \textbf{ to } $R_{0}$\\
\>\>\>\>\>\>\>\textbf{endif}\\
\>\>\>\>\>\>\textbf{endif}\\
\>\>\>\>\>$\ldots$\\
\>\>\>\>\>\textbf{endif}\\
\>\>\>\>\textbf{endfor}\\
\>\>\> $\ldots$\\
\>\>\textbf{endfor}\\
\>\textbf{endfor}
\end{tabbing}
\caption{Loop-Nest: a relational algebra query is translated by a query planner to nested loop join; each 
loop enumerates tuples of a relation and filters selected tuples $t_1, \ldots, t_k$  using a range 
tuple predicate $\rho(\ell, a, b)$. The inner most loop projects ($\pi$) selected tuples 
$t_1, \ldots, t_k$ to a tuple for the output relation $R$. If the tuple does not exist, it will be 
added to the output relation.} 
\label{fig:range-nested loop join}
\end{figure}

An indexed nested loop join is refered to as a \emph{range nested loop join}. Range nested loop joins are 
similar to primitive nested loop joins only that they are further specialized 
to operate on \emph{range searches}. Range searches assume and index and 
hence assume that tuples are ordered, hence an 
ordered set of tuples exhibits a worst-case complexity for executing a range search 
in a linear-log time in the size of the output, i.e, $\bigo(|\sigma_{\rangequery(\ell,a,b)}| \log n)$ 
where $n$ is the number of tuples in a relation $\rel$. We define a range search in Def.~\ref{def:rs}.

\begin{definition}[Range Search]
\label{def:rs}
  A range  $\sigma_{\rangequery(\ell,a,b)}$ is defined for a relation
  $\rel \subseteq {\mathcal D}$ and its
  semantics is given by, \[ \sigma_{\rangequery(\ell,a,b)} (R) = \{ t \in R
  \mid a \, \sqsubseteq_\lex \, t \wedge t \, \sqsubseteq_\lex \, b \}
  \] where attribute sequence $\lex \in L$, lower bound $a$ and upper bound $b$ are tuples in $\mathcal D$,  respectively.
\end{definition}

\subsection{Constructing Bounds for Range Searches}
Each range search contains two symbolic bounds $a$ and $b$ in the range searches predicate as well 
as an index $\ell$. The primitive searches in a primitive nested loop join
may not specify all attributes in their search predicate. Therefore, the
construction of the lower and upper bound require care. Unspecified 
values need to be padded with infima and suprema values for lower and upper bounds,
respectively.
 We define an unspecified elements for the 
lower/upper bound construction by an artificial constants\footnote{We assume that $\triangle$ is not
  element of any of the domains $\rdom_i$.} $\triangle$.
 We define a bijective index mapping function $i_k: \{1, \ldots,
m\} \rightarrow \{1, \ldots, k+1\}$ that maps the specified elements to
their corresponding constant values, and the unspecified elements to
$\triangle$. We further introduce an artificial value $v_{k+1} =
\triangle$ for the primitive search such that for unspecified
attributes the unspecified symbol $\triangle$ is used. The
construction of the lower and upper bound is performed by the
functions $\textit{lb}$ and $\textit{ub}$, respectively,
\begin{align*}
  a &= \textit{lb}(v_{i_1}, \ldots, v_{i_m}) \\
  b &= \textit{ub}(v_{i_1}, \ldots, v_{i_m})
\end{align*}
that replace the unspecified $\triangle$ value to either the infimum
or supremum of the domain $\rdom_i$, respectively. Formally, the
functions are defined as $\textit{lb}(v_1, \ldots, v_m) = (v'_1,
\ldots, v'_m)$ where \[v'_i = \begin{cases} v_i & \mbox{if $v_i \not = \triangle$} \\ \bot_i & \mbox{otherwise} \end{cases}\]
and  $\textit{ub}(v_1, \ldots, v_m) = (v'_1,\ldots, v'_m)$ where 
\[v'_i = \begin{cases} v_i & \mbox{if $v_i \not = \triangle$} \\
  \top_i & \mbox{otherwise} \end{cases}\] Basically, the functions
$\textit{lb}$ and $\textit{ub}$ are identity functions except for the
case of unspecified values $\triangle$, which are either converted to
infima of suprema of the corresponding element domain. The construction of lower and
upper bounds for partial attribute searches is correct for the full
attribute search, i.e., the are no unspecified values in the value.

\subsection{Computing Index Sets}
The next step is to compute a set of indices which are mapped to range searches predicates. For this there are several 
options available, including producing an index with an orderings given by the default order the attributes syntactically appear in 
the atom, randomized orders etc. However, when complex sets of queries (complex access patterns, variables bindings etc.) 
are present and when large relations are processed, constructing an optimal set of $\ell$s for range searches is 
crucial. As previously states this is is a variation of the classical index selection problem which we 
examine in more detail in Section~\ref{sec:mosp} and demonstrate its performance impact in Section~\ref{sec:experiments}.

\subsection{Range Search Cover}
An important characteristic of range searches is that while exhibiting better worst-case search 
performance, they retain the semantics of primitive searches. We refer to this as a 
To establish this property we  define the notion of a \emph{prefix set} $ \textbf{prefix}_{k}(\lex)$. The prefix 
set produces the first $k$ elements of an index $\lex$ over the set of attributes $\{a_1,\ldots, a_p\}$ of a relation.

\begin{definition}[Prefix Set]
\label{def:prefixset}
Let $\ell=a_{{1}}\prec a_{{2}} \prec \ldots \prec a_{k} \prec a_{k+1} \prec \ldots \prec a_{{p}}$ be a lexicographical order sequence (index), the prefix set of $\ell$ is defined as:

\begin{align*}
\textbf{prefix}_{k}(a_{1}\prec a_{2} & \prec \ldots \prec a_{k} \prec a_{k+1} \prec \ldots \prec a_{p}) = \\
& \begin{cases}
\{a_{1},a_{2},\ldots,a_{k}\},  \mbox{if \ensuremath{k\leq p}},\\
\{a_{1},a_{2},\ldots,a_{p}\},  \mbox{otherwise.}
\end{cases}
\end{align*}

\end{definition}

We say a range search $\sigma_{\rho(\ell, a , b)}(R)$ \emph{covers} a 
primitive search $\searchquery_{x_1=v_1, \ldots, x_k=v_k}(R)$, 
if the $k$-th prefix of $\lex$ result in set $\{x_1, \ldots, x_k\}$. Hence, 
an index represented by an attribute sequence $\lex$ may cover a 
multitude of primitive searches
assuming the elements of its prefixes coincide with the attributes of
the searches. Conversely, if the elements of a primitive search does not
show in the elements of the $k$-th prefix of $\lex$, the primitive
search cannot be covered/executed by a range search using $\lex$.

\begin{lemma}[Range Search Cover]
\label{lemma:rangequerycover}
 \[\forall
  R\subseteq {\mathcal D} : \searchquery_{x_1=v_1, \ldots, x_k=v_k}(R) =
  \sigma_{\rangequery(\lex,a,b)}(R)\], $ \{x_1, \ldots, x_k\} = \textbf{prefix}_{k} (\lex) $,  $a= \textit{lb}(v_{1}, \ldots, v_{k})$, and $b=
  \textit{ub}(v_{1}, \ldots, v_{k})$.
\end{lemma}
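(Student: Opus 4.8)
The plan is to prove the set identity by double inclusion, reading every tuple componentwise in the order prescribed by $\lex$ and exploiting how the bounds $a$ and $b$ are laid out against that order. First I would fix notation: write $\lex = y_{1} \prec \ldots \prec y_{m}$, so that the hypothesis $\textbf{prefix}_{k}(\lex) = \{x_{1}, \ldots, x_{k}\}$ says the leading $k$ attributes $y_{1}, \ldots, y_{k}$ are exactly the searched attributes (in some order) and $y_{k+1}, \ldots, y_{m}$ are the unspecified ones. From the definitions of $\textit{lb}$, $\textit{ub}$, and the index mapping $i_{k}$, the tuple $a$ and the tuple $b$ both carry the specified value $v_{i}$ at the position of each $x_{i}$ in the prefix (so $a$ and $b$ \emph{coincide} on $y_{1}, \ldots, y_{k}$), while on each unspecified position $a$ carries the infimum $\bot$ and $b$ the supremum $\top$ of the corresponding domain. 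These two structural facts drive both inclusions.

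For $\searchquery_{x_{1}=v_{1}, \ldots, x_{k}=v_{k}}(R) \subseteq \sigma_{\rangequery(\lex,a,b)}(R)$, take $t \in R$ with $t(x_{i}) = v_{i}$ for all $i$. Then $t$ agrees with both $a$ and $b$ on $y_{1}, \ldots, y_{k}$, and on every later position $y_{j}$ we have $a(y_{j}) = \bot \le t(y_{j})$ and $t(y_{j}) \le \top = b(y_{j})$. Hence the first position (if any) at which $a$ and $t$ differ has $a$ strictly smaller, giving $a \sqsubseteq_{\lex} t$, and symmetrically the first position at which $t$ and $b$ differ has $t$ strictly smaller, giving $t \sqsubseteq_{\lex} b$. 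Thus $t$ lies in the range search.

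For the reverse inclusion, take $t \in R$ with $a \sqsubseteq_{\lex} t \sqsubseteq_{\lex} b$ and suppose, for contradiction, that $t$ differs from the common value of $a$ and $b$ somewhere in the prefix; let $y_{j}$ with $j \le k$ be the earliest such position. Since $a$, $t$, and $b$ agree on $y_{1}, \ldots, y_{j-1}$, the comparison at $y_{j}$ decides both lexicographic orderings. If $t(y_{j}) < a(y_{j})$, then $t \sqsubseteq_{\lex} a$ with $t \neq a$, which together with $a \sqsubseteq_{\lex} t$ forces $t = a$ by antisymmetry, a contradiction; if $t(y_{j}) > b(y_{j}) = a(y_{j})$, then $b \sqsubseteq_{\lex} t$ with $b \neq t$, contradicting $t \sqsubseteq_{\lex} b$ likewise. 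Hence $t$ agrees with $a$ on all of $y_{1}, \ldots, y_{k}$, i.e.\ $t(x_{i}) = v_{i}$ for every $i$, so $t$ belongs to the primitive search. Combining the inclusions yields the identity for every $R \subseteq \mathcal{D}$.

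The underlying argument is the standard ``sandwich'' for lexicographic orders, so I expect the genuine difficulty to lie not in the inequalities but in the bookkeeping around the bound construction: one must check that $i_{k}$ together with the $\triangle$-padding really places the values so that $a$ and $b$ are equal on precisely the prefix positions and split into $\bot$/$\top$ on the complement, and one must argue that the internal order of the attributes within the prefix of $\lex$ is immaterial --- the primitive search depends only on the set $\{x_{1}, \ldots, x_{k}\}$, and the sandwich is insensitive to that internal order once the bounds are placed consistently with $\lex$. A side-condition to flag explicitly is that $\bot$ and $\top$ must be the true infimum and supremum of each domain $\rdom_{i}$ (and each domain be totally ordered), since this is exactly what makes the componentwise padding inequalities and the case split at $y_{j}$ valid.
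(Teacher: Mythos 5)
Your proof is correct, but it takes a different route from the paper's. The paper argues by induction on the length of $\lex$: the base case is a single attribute, and the inductive step peels off the condition $t(x_{n+1})=v_{n+1}$ and folds it into a nested range restriction, with a final remark handling the padding of trailing unspecified attributes by $\bot$ and $\top$. You instead give a direct double-inclusion (``sandwich'') argument: you observe that $a$ and $b$ coincide with the specified values on the prefix positions of $\lex$ and split into $\bot$/$\top$ on the complement, derive $a \sqsubseteq_{\lex} t \sqsubseteq_{\lex} b$ from the equalities by inspecting the first differing position, and for the converse locate the earliest prefix position where $t$ could deviate and rule out both strict inequalities. Your version is more self-contained and makes explicit exactly which structural facts about $\textit{lb}$, $\textit{ub}$, and the index mapping are being used, whereas the paper's induction is terser and leaves the componentwise reasoning implicit inside the rewriting step. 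You also correctly flag the two side conditions the paper treats only informally: that $\bot_i$ and $\top_i$ must be genuine extrema of totally ordered domains, and that the lemma depends only on the \emph{set} $\{x_1,\ldots,x_k\}$ matching the prefix, not on the internal order of attributes within it (the paper defers the permutation point to the remark following the lemma). No gap in either argument; yours is arguably the cleaner writeup.
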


The correspondence can be extended for any permutation over the set
$\{x_1, \ \ldots, \ x_k\}$. This follows from the commutativity
property of the search predicate, i.e., the actual order of the
equality condition in the search primitive is irrelevant. An extension of 
$\lex$ with further attributes still preserves the
correspondence, i.e., for all lexicographical orders $\lex=x_1 \prec \ldots \prec x_k
  \prec x_{k+1} \prec \ldots \prec x_{k+s} $, it still holds that $\forall
  R\subseteq {\mathcal D} : \searchquery_{x_1=v_1, \ldots, x_k=v_k}(R) =
  \sigma_{\rangequery(\lex,a,b)}(R)$. 
To construct a range search for a primitive search, we need to compute the bounds and lexicographical order.

\begin{example}[Motivating (Cont.)]
Let us assume  we have a set of primitive search predicates extracted from a query
(See Section~\ref{sec:integrate}). The primitive search predicates are 
depicted in the left column, indices in the middle column and associated bounds of the range search predicate 
in the right column in the Table~\ref{tbl:ranget}.

\begin{table*}
\begin{center}
\footnotesize
  \begin{tabular}{l|l|l|l}
  & \multicolumn{3}{c}{Range Predicate $\rangequery_{(\ell,a,b)}$} \\
Primitive Predicate & $\ell$  & $a$ &  $b$ \\
  \hline 
$x = t_1(z)$                   & $x$                 & $\langle t_1(z), \bot, \bot \rangle $  & $\langle t_1(z),\top,\top \rangle $  \\
$x = t_1(y), y = t_1(z)$       & $x \prec y$         & $\langle t_1(y),t_1(z),\bot\rangle$ & $\langle t_1(y),t_1(z),\top\rangle$ \\
$x = t_1(y), z = t_1(z)$          & $x \prec z$         & $\langle t_1(y),\bot,t_1(z)\rangle$ & $\langle t_1(y),\top,t_1(z)\rangle$ \\
$x = t_1(z), y = t_1(y), z = t_1(x)$ & $x \prec y \prec z$ & $\langle t_1(z),t_1(y),t_1(x)\rangle$ & $\langle t_1(z),t_1(y),t_1(x)\rangle$ \\ 
 \end{tabular}
\end{center}
\caption{Range Search Predicate for Primitive Search Predicates\label{tbl:ranget} }
\end{table*}

Here, $c_x$, $c_y$ and $c_z$ denote arbitrary constants, and $\bot$ and $\top$ denote infima and suprima the domain of the 
relation since they are not considered by the lexicographical order $\sqsubseteq_\ell$. 

Using range searches, the complexity of the input query substantially reduces from $\bigo(n^5)$ to \linebreak
$\bigo\left(n \log^4 n \prod_{2 \leq i \leq 5} |\searchquery_{\varphi_i}(A)|\right)$ assuming that results of 
the primitive searches, i.e., $|\searchquery_{\varphi_i}(A)|$ are significantly smaller than $\bigo(n)$. In this 
example, we have to maintain four different indices causing significant overheads for large instances of relation $A$. 
\end{example}

\section{Computing Minimal Index Sets}
\label{sec:mosp}
We have seen that range searches are essential for the efficient execution of Datalog queries. However, when 
constructung range searches, the question remains: \emph{what is the minimal set of indices needed to cover all primitive searches} for
a given relation. 

\subsection{Minimal Order Selection Problem}
Before we define the problem of finding minimal indices, we establish some
additional definitions: let $A=\{x_{1},\ldots,x_{k}\}$ be a finite set of attributes from a given 
relation such that $A \subseteq A_R$. A primitive search 
$\searchquery_{x_1=v_1, \ldots, x_k=v_k}$ is abstracted as a set of search attributes, 
which we refer to as a \emph{search} denoted by $\search \subseteq A$ i.e., $\search = \{x_1, \ldots, x_k \}$. 

As before, we denote an index as $\ell$. The sequence $\lex$ is formed by a subset of 
attributes $\ell \in \mathcal{L} = \bigcup_{X \subseteq A, X\not=\emptyset} \perm (X).$ i.e., $\mathcal{L}$ 
represents the set of all possible permutations/sequences that may be formed by the elements of set 
$X=\{x_1, \ldots, x_k\}$. The set of index sets is defined by $\mathbb{L}$ ranged over by $L$. 

Given a set of searches $\searchset$ for a relation $\rel$ we would like to know which set of indices $L$ will cover 
$\searchset$. We formalize this via the \textbf{l-cover} predicate.

\begin{definition}[l-cover]
We define a predicate $\textbf{l-cover}_{\searchset}(L)$ such that:
\[
\textbf{l-cover}_{\searchset}(L)=\forall \search\in \searchset:\exists \ell \in L:\textbf{prefix}_{|\search|}(\ell)=\search.
\]
\end{definition}

The predicate $\textbf{l-cover}$ provides a means to express the problem of finding the minimal set of indices 
for a relation and its searches which we name the Minimal Order Selection Problem (MOSP).
An input instance of MOSP is given by a set of searches $\searchset$. The set of attributes 
are the attributes of the searches, i.e., $A=\bigcup_{\search \in \searchset} \search$ 
which are relevant for the index selection. 
MOSP seeks to find the set of all solutions where a solution is all minimal sets of 
indices $L$ such that $\textbf{l-cover}_{\searchset}(L)$ holds. 

\begin{definition}[Minimum Order Selection Problem]
The minimum order selection problem finds index sets with minimal
cardinality such that searches are covered by each index set, i.e.,
\[
f_{\searchset}=\arg\min_{L\in\mathbb{L}:\textbf{l-cover}_{\searchset}(L)}|L|.
\]
\end{definition}

\begin{example}
Consider the primitive searches with predicates
$x=v_1$, $x=v_2 \wedge y=v_3$, $x=v_4 \wedge z=v_5$, and 
$x=v_5 \wedge y=v_6 \wedge z=v_7$ over a relation $R$ 
with attribute set $A=\{x,y,z\}$. The actual values 
$v_1$ to $v_7$ in the conditions of the primitive search
are irrelevant as primitive searches are reduced to their
searches (attribute sets) for MOSP, i.e., $\searchset=\langle \{x\},\{x,y\},\{x,z\},\{x,y,z\} \rangle$. One
possible solution for the given instance $(\searchset,A)$ of MOSP would be the index set
$L=\{ x, x \prec y, x \prec z, x \prec y \prec z \}$.  Here each search
is covered by an index. For example, the search $\{x\}$ is covered by
the lexicographical order $x$ and so forth. However, the index set 
$L$  is not a minimal set. For example, the order $x \prec y \prec z$
would cover the searches $\{x\}$, $\{x,y\}$, and $\{x,y,z\}$ since
$x$ is a prefix of length one, $x$, and $y$ are a prefix of length two, 
and $x$,$y$, and $z$ are a prefix of length 3. 
\end{example}

\subsection{Inviability of a Brute-force MOSP Algorithm}
\label{subsec:bfalgorithm}
Before solving MOSP, we would like to understand the size of the solution space of
MOSP. If the number of solutions for an instance of MOSP is very large, a
brute-force algorithm is (assuming a small number of attributes per
input relation) not viable, particularly for high performance engines. Therefore, we 
find bounds on the number of ordered subsets of attribute
set. Constructing a closed form for the cardinality of all possible
lexical graphical orders is hard, however, it can be
bounded.

\begin{lemma}
\label{lem:cardinality}
The cardinality of the set of all sequences is bounded by
\begin{equation*}
m! \leq | \lexset | \leq \euler \cdot m!
\end{equation*}
\end{lemma}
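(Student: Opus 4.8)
The plan is to compute $|\lexDom|$ exactly by a counting argument and then sandwich the resulting expression between $m!$ and $\euler\cdot m!$ using the truncated exponential series. Recall that $\lexDom=\bigcup_{X\subseteq A,\,X\neq\emptyset}\perm(X)$ and that $m=|A|$. First I would observe that the sets $\perm(X)$ are pairwise disjoint across distinct nonempty $X\subseteq A$: every sequence in $\perm(X)$ uses exactly the attributes of $X$, each occurring exactly once, so the underlying attribute set of a sequence is uniquely determined by the sequence. Consequently sequences drawn from different $X$ can never coincide, and $|\lexDom|=\sum_{X\subseteq A,\,X\neq\emptyset}|\perm(X)|$.

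Next I would group the subsets by size. There are $\binom{m}{k}$ subsets $X\subseteq A$ with $|X|=k$, and by the factorial identity from the preliminaries each contributes $|\perm(X)|=k!$ sequences. Therefore $|\lexDom|=\sum_{k=1}^{m}\binom{m}{k}\,k!=\sum_{k=1}^{m}\frac{m!}{(m-k)!}$. Re-indexing with $j=m-k$ turns this into a clean closed form, $|\lexDom|=m!\sum_{j=0}^{m-1}\frac{1}{j!}$, which isolates the truncated exponential series as the only nontrivial factor.

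Finally I would bound $S_m=\sum_{j=0}^{m-1}\frac{1}{j!}$ from both sides. For the lower bound, $S_m\ge 1$ since the $j=0$ term already equals $1$ and every term is positive, giving $|\lexDom|\ge m!$. For the upper bound, $S_m$ is a partial sum of the convergent series $\sum_{j=0}^{\infty}\frac{1}{j!}=\euler$, hence $S_m\le\euler$ and $|\lexDom|\le\euler\cdot m!$. Combining the two inequalities yields the claim.

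I do not expect a genuine obstacle here, as the argument is elementary. The only points requiring care are justifying the disjointness of the $\perm(X)$, so that the cardinality of the union is exactly the sum of the parts, and recognizing the re-indexed sum as a truncated exponential series, so that Euler's constant $\euler$ enters naturally as its limit and furnishes a tight, $m$-independent upper factor. Everything else is routine binomial bookkeeping.
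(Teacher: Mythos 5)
Your proof is correct and follows essentially the same route as the paper's: both decompose $\bigcup_{X\subseteq A,\,X\neq\emptyset}\perm(X)$ by subset size to get $\sum_{k=1}^{m}\binom{m}{k}k! = m!\sum_{j=0}^{m-1}\frac{1}{j!}$ and then bound the truncated exponential series by $\euler$. The only cosmetic differences are that you justify the lower bound via the $j=0$ term of the series while the paper uses $\perm(A)\subseteq\lexset$ directly, and you explicitly argue disjointness of the $\perm(X)$, a point the paper leaves implicit.
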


Let $m$ be the number of attributes in a attribute set. For a large $m$, the absolute error of the over-approximation will be
small since the term $m!\sum_{i \geq m} \frac{1}{i!}$ of the
over-approximation will converge quickly. For $m$ between $1$ and $9$,
the values of $|\lexset |$ and the relative error
$\varepsilon=\frac{\euler m! - |\lexset|}{|\lexset|}$ of the
over-approximation is given in the table below:
\begin{center}
\begin{tabular}{r|r|r}
\multicolumn{1}{c|}{$\tl$} & 
\multicolumn{1}{c|}{${|\lexset|} $} &
\multicolumn{1}{c}{$ \% \varepsilon$} \\
\hline
1 &          1 & 171.828 \\
2 &          4 & 35.914 \\
3 &         15 & 8.731 \\
4 &         64 & 1.936 \\
5 &        325 & 0.367 \\
6 &       1956 & 0.059 \\
7 &      13699 & 0.008 \\
8 &     109600 & 0.001 \\
9 &     986409 & $\approx$ 0.000
\end{tabular}
\end{center}

MOSP searches for the smallest subset of $\lexset$ that covers all
primitive searches of the input query. A brute-force approach would
require to find a set of lexicographical orders in search space
$2^\lexset$ for the minimal set of lexicographical orders, i.e.,
$|2^{\lexset}| = 2^{|\lexset|}$. Using the approximation of set
$\lexset$, we obtain a complexity of $\bigo(2^{\euler \cdot m!})$.

\begin{theorem}[MOSP Worst-Case Run-Time]
A brute-force algorithm for MOSP exhibits a worst-case run-time
complexity of $\bigo(2^{m^m})$.
\end{theorem}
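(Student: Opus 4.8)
The plan is to derive the stated closed form directly from the cardinality estimate of Lemma~\ref{lem:cardinality}, so the only genuine work is re-expressing the bound $2^{\euler\cdot m!}$ as $\bigo(2^{m^m})$. Following the framing established immediately before the theorem, a brute-force algorithm must search the power set $2^{\lexset}$ of all candidate index sets, so its worst-case running time is governed by the cardinality of this space, $|2^{\lexset}| = 2^{|\lexset|}$. Invoking the upper bound of Lemma~\ref{lem:cardinality}, namely $|\lexset| \le \euler\, m!$, this immediately yields a worst-case running time of $\bigo(2^{\euler\, m!})$. It therefore remains to show $2^{\euler\, m!} = \bigo(2^{m^m})$, which is equivalent to establishing $\euler\, m! \le m^m$ for all sufficiently large $m$.

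For this final inequality I would use a quantitative form of Stirling's approximation, $m! \le \euler\, m^{m+1/2}\euler^{-m}$, which after multiplying by $\euler$ and factoring out $m^m$ gives $\euler\, m! \le m^m \cdot \euler^{\,2-m}\sqrt{m}$. The correction factor $\euler^{\,2-m}\sqrt{m}$ tends to $0$, since the exponential decay $\euler^{-m}$ dominates $\sqrt{m}$; in particular it is at most $1$ for every $m \ge 3$, a crossover one can confirm by hand as $\euler\cdot 6 \approx 16.3 \le 27 = 3^3$. Hence $\euler\, m! \le m^m$ for all $m \ge 3$, so $2^{\euler\, m!} \le 2^{m^m}$ on that range, and the finitely many small cases $m \in \{1,2\}$ are absorbed into the big-O constant. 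Chaining the search-space cardinality, the $\euler\, m!$ size bound, and $\euler\, m! \le m^m$ then delivers $\bigo(2^{m^m})$ as claimed.

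I expect the main obstacle to be precisely this constant-absorption step, where it is tempting but fatal to bound $m!$ by $m^m$ naively: that weaker inequality only gives $\euler\, m! \le \euler\, m^m$ and hence $2^{\euler\, m!} \le (2^{m^m})^{\euler}$, which is \emph{not} $\bigo(2^{m^m})$ because the constant sits in the base of the outer exponential. The point is rather to exploit that the ratio $m^m/m!$ grows like $\euler^{m}/\sqrt{2\pi m}$, so that the multiplicative constant $\euler$ is swallowed inside the exponent instead of merely inside the base. Getting this gap explicit, via Stirling, is the entire substance of the argument.
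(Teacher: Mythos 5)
Your proposal is correct and follows essentially the same route as the paper, which likewise reduces the claim to the search-space bound $2^{\euler\cdot m!}$ and then appeals to Stirling's approximation (the paper's own proof is only a one-sentence remark to that effect). Your write-up actually supplies the missing quantitative step, and your observation that the naive bound $m!\leq m^m$ would only give $(2^{m^m})^{\euler}$ rather than $\bigo(2^{m^m})$ correctly identifies the one place where care is needed.
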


The theorem can be shown by using Sterling's approximation. The 
approximation becomes more precise for a large $\tl$.
Note, that a brute-force approach becomes intractable very quickly. Assume that
relation $\rel$ has four attributes. For a relation with 4 attributes, 
a brute-force MOSP algorithm has to test $2^{64}\approx1.8E18$ 
different index subsets for coverage and minimality.

\subsection{Minimal Query Chain Covers}
\label{subsec:algorithm}

In this subsection we present a problem related to MOSP, namely, the Minimum Chain
Cover problem (MCCP) introduced by Dilworth~\cite{Dilworth50}. We have seen that 
a brute force exploration of the MOSP space is infeasible. In the preceding subsections 
we use a combinatorial relationship between the search attributes of primitive searches (present in MCCP) and 
indices (present in MOSP) and exploit the relationship between the two to derive minimal index sets for a 
relation in polynomial time.

A search chain $c\in\mathcal{C}$ is a set of searches $\{\search_{1},\ldots, \search_{k}\}$
that subsume each other and form a total order, i.e., 
$c\equiv \search_{1}\subset \search_{2}\subset\ldots\subset \search_{k}$. We define 
a set of chains $C\in\mathbb{C}$. A
chain $c\in\mathcal{C}$ is a set of searches $\{\search_{1},\ldots, \search_{k}\}$
that subsume each other and form a total order, i.e., $c\equiv \search_{1}\subset \search_{2}\subset\ldots\subset \search_{k}$.
A chain $c$ covers a search $\search$ if $\search\in c$. A set of chains $C\in\mathbb{C}$
cover a search set $\searchset$ if there exists at least one chain $c\in C$
for each search $\search\in \searchset$:

\begin{definition}[c-cover]
We define a predicate $\textbf{c-cover}_{\searchset}(C)$ such that:
\[
\textbf{c-cover}_{\searchset}(C)=\forall \search\in \searchset:\exists c\in C:\search\in c
\]
\end{definition}

The objective of the minimum chain cover problem is to find the smallest
set of chains that cover all searches in $\searchset$, i.e., 

\begin{definition}[Minimum Chain Cover Problem (MCCP)]
\label{def:mccp}
\[
g_{\searchset}=\arg\min_{C\in\mathbb{C}:\textbf{c-cover}_{\searchset}(C)}|C|
\]
\end{definition}

Dilworth's Theorem~\cite{Dilworth50} states that in a finite partial
order, the size of a maximum anti-chain is equal to the minimum number
of chains needed to cover its elements. An anti-chain is a subset of a
partial ordered set such that any two elements in the subset are
unrelated, and a chain is a totally ordered subset of a partial
ordered set.
Although Dilworth's Theorem is non-constructive, there exists
constructive versions that solve the minimum chain problem
either via the maximum matching problem in a bi-partite
graph~\cite{Fulkerson56} or via a max-flow problem~\cite{Pijls2013}. 
Both problems are optimally solvable in polynomial time. 

\subsection{Relationship Between MOSP and MCCP}

The relationship between MCCP and MOSP summarized in 
Fig.~\ref{fig:relationship} and Fig.~\ref{fig:chainindexmap}. In this section we outline the main theorems and lemmata, 
providing full proofs in the appendix. We use the relationship between MOSP and MCCP to develop an algorithm 
for solving MOSP as outlined in Subsection~\ref{subsec:algorithm}. Our approach defines two mapping functions 
which contain several properties, which we use to translate solutions from one space to another.

\subsubsection{Mapping Functions} 
We first define mappings between indices and chains $\alpha$. This mapping is defined on 
two levels as follows:

\begin{definition}[Index to Chain Mapping $\alpha$]
\label{def:alpha}
\begin{align*}
&\alpha_{\searchset}^{0}:\mathcal{L}\rightarrow\mathcal{C} \textit{ s.t. } \ell \mapsto\{\search\in \searchset \ \mid \ \textbf{prefix}_{|\search|}(\ell)=\search\}\\
&\alpha_{\searchset}^{1}:\mathbb{L}\rightarrow\mathbb{C} \textit{ s.t. } L\mapsto\{\alpha_{\searchset}^{0}(\ell) \ \mid \ \ell\in L\}
\end{align*}
\end{definition}

We highlight that by the definition of the prefix set function (Def.~\ref{def:prefixset} in 
Section~\ref{sec:cover}), the searches
$\{\search_{1},\ldots,\search_{k}\}$ of $\alpha_{\searchset}^0(\ell)$ form a chain 
$c\equiv \search_{1}\subset \search_{2}\subset\ldots\subset \search_{k}$. Similarity, 
we also define a mapping $\gamma$ between chains and indices as follows:


\begin{definition}[Chain to Index Mapping $\gamma$]
\label{def:gamma}
\begin{align*}
\gamma_{\searchset}^{0}:&\mathcal{C}\rightarrow\mathcal{\mathbb{L}} \textit{ s.t. } \search_{1}\subset \search_{2}\subset\ldots\subset \search_{k-1}\subset \search_{k}\mapsto \\
&\langle \search_{1}\prec \search_{2}-\search_{1}\prec\ldots\prec \search_{k}-\search_{k-1}\rangle \\
\gamma_{\searchset}^{1}:&\mathbb{C}\rightarrow2^{\mathbb{L}} \textit{ s.t. } \{c_{1},\ldots,c_{k}\}\mapsto \\
&\left\{ \{\ell_{1},\ldots,\ell_{k}\}|\ell_{1}\in\gamma_{\searchset}^{0}(c_{1}),\ldots,\ell_{k}\in\gamma_{\searchset}^{0}(c_{k})\right\}
\end{align*}
\end{definition}

We observe that for all chains $c\in\mathcal{C}\setminus\{\emptyset\}$
that contain at least one search, there exists at least one index in
$\gamma_{\searchset}^{0}(c)$. 




\subsubsection{Cardinality Relationship} 
The first set of lemmata, define properties on the cardinality relationship between 
the lexicographical and chain spaces.

Below we establishe the cardinality relationship between a set of indices and chains 
via the $\gamma$ function.

\begin{lemma}[$\gamma$ and Chain Cardinality]
\label{lemma:chaincard}
We observe that $\forall C\in\mathbb{C}:\forall L\in\gamma_{\searchset}^{1}(C):|L|\leq|C|$
(by construction) and for any non-empty chain set $C$, $|\gamma^{1}_{\searchset}(C)|>0$,
i.e., there exists always at least one index set.
\end{lemma}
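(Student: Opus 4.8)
The plan is to treat the two assertions of the lemma separately, since each is a direct structural consequence of the definitions of $\gamma^0_{\searchset}$ and $\gamma^1_{\searchset}$ (Def.~\ref{def:gamma}) together with the observation that immediately precedes the statement. Throughout I would fix a chain set $C=\{c_1,\ldots,c_k\}$, so that $|C|=k$, and recall that $\gamma^0_{\searchset}(c_i)$ is read as the \emph{set} of all indices obtained by linearising the difference blocks $\search_1,\ \search_2-\search_1,\ \ldots$ of $c_i$ (the internal order within each block being free), hence an element of $\mathbb{L}$.

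For the inequality $|L|\leq|C|$ I would unfold $\gamma^1_{\searchset}$: every $L\in\gamma^1_{\searchset}(C)$ is by definition of the form $L=\{\ell_1,\ldots,\ell_k\}$ with $\ell_i\in\gamma^0_{\searchset}(c_i)$. As $L$ is assembled as a set from $k$ chosen indices, it cannot hold more than $k$ elements, so $|L|\leq k=|C|$ follows immediately --- this is the ``by construction'' remark. The one point worth a sentence is that the bound may be strict: two distinct chains can contribute a coincident index $\ell_i=\ell_j$, collapsing two choices into a single set element, so $|L|<|C|$ is possible while $|L|>|C|$ never is.

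For the existence claim $|\gamma^1_{\searchset}(C)|>0$ I would reduce to showing each factor $\gamma^0_{\searchset}(c_i)$ is non-empty and then exhibit one witness. Since $C$ is a non-empty set of non-empty chains, the preceding observation gives $\gamma^0_{\searchset}(c_i)\neq\emptyset$ for every $i$; choosing an arbitrary $\ell_i\in\gamma^0_{\searchset}(c_i)$ yields $L=\{\ell_1,\ldots,\ell_k\}\in\gamma^1_{\searchset}(C)$, so the image is non-empty. Should that observation not be assumed, I would establish $\gamma^0_{\searchset}(c)\neq\emptyset$ directly for $c\equiv\search_1\subset\cdots\subset\search_k$ by concatenating an arbitrary permutation of $\search_1$, then of $\search_2-\search_1$, and so on: strictness of the inclusions makes every difference block non-empty, and the resulting permutation of $\search_k$ is a valid index in $\gamma^0_{\searchset}(c)$. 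I expect no real obstacle here; the lemma is combinatorial bookkeeping, and the only care needed is the set-versus-multiset distinction in the first part and the non-emptiness of the difference blocks in the second.
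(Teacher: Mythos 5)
Your proposal is correct and matches the paper's approach: the paper's own proof is literally the single phrase ``by construction,'' and your write-up simply makes explicit the definitional unfolding of $\gamma^{1}_{\searchset}$ (a set assembled from $k$ choices has at most $k$ elements, possibly fewer by coincidence) and the non-emptiness of each $\gamma^{0}_{\searchset}(c_i)$ that the paper leaves implicit. Nothing is missing; your version is just the fully spelled-out form of the same argument.
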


In the above lemma we see that the size of each index set mapping to a chain set is bound by the size 
of the chain set. Conversely, we assert in the next Lemma that $\alpha$ does not modify the size of 
a set of lexicographical orders in the MOSP solution space.

\begin{lemma}[$\alpha$ and MOSP Element Cardinality]
\label{lemma:set-size}
The size of a set of lexicographical orders is preserved by $\alpha$, i.e.,  
\[
\forall L\in f_{\searchset}:|\alpha_{\searchset}^{1}(L)|=|L|
\]
\end{lemma}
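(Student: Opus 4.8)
The plan is to split the claimed equality into the two inequalities $|\alpha_{\searchset}^{1}(L)| \le |L|$ and $|\alpha_{\searchset}^{1}(L)| \ge |L|$, and to locate all the real content in the second one, which is where the hypothesis $L \in f_{\searchset}$ (minimality) is used. The first inequality is free and holds for \emph{every} $L \in \mathbb{L}$: by Definition~\ref{def:alpha} we have $\alpha_{\searchset}^{1}(L) = \{\alpha_{\searchset}^{0}(\ell) \mid \ell \in L\}$, which is simply the image of the finite set $L$ under the map $\alpha_{\searchset}^{0}$, and the image of a finite set under any function can only shrink in cardinality. So the whole task reduces to showing that this map does not, in fact, collapse any two elements of a \emph{minimal} cover.

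Accordingly, I would prove the reverse inequality by establishing that $\alpha_{\searchset}^{0}$ is injective when restricted to $L \in f_{\searchset}$, since injectivity on $L$ forces the image to have exactly $|L|$ elements. The first preparatory step is to unfold the definitions so that membership $\search \in \alpha_{\searchset}^{0}(\ell)$ reads precisely as $\textbf{prefix}_{|\search|}(\ell) = \search$; i.e. $\search \in \alpha_{\searchset}^{0}(\ell)$ is exactly the statement that the index $\ell$ covers the search $\search$. With this dictionary in place the argument runs by contradiction: assume there exist two \emph{distinct} indices $\ell_1, \ell_2 \in L$ with $\alpha_{\searchset}^{0}(\ell_1) = \alpha_{\searchset}^{0}(\ell_2)$, and consider the strictly smaller candidate $L' = L \setminus \{\ell_2\}$, for which $|L'| = |L| - 1$.

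The key step is then to verify that $L'$ is still a cover, i.e. that $\textbf{l-cover}_{\searchset}(L')$ holds. Take any $\search \in \searchset$. Since $\textbf{l-cover}_{\searchset}(L)$ holds, some $\ell \in L$ covers $\search$. If $\ell \ne \ell_2$, then $\ell \in L'$ already witnesses coverage of $\search$; otherwise $\search \in \alpha_{\searchset}^{0}(\ell_2) = \alpha_{\searchset}^{0}(\ell_1)$, so the surviving index $\ell_1 \in L'$ covers $\search$. Either way $\search$ is covered by $L'$, so $L'$ is a cover of strictly smaller cardinality than $L$ --- contradicting that $L$ attains the minimum defining $f_{\searchset}$. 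Hence no such colliding pair exists, $\alpha_{\searchset}^{0}$ is injective on $L$, and combining with the first inequality yields $|\alpha_{\searchset}^{1}(L)| = |L|$.

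The main obstacle, such as it is, lies entirely in this coverage-preservation step rather than in any cardinality bookkeeping: one must see that two indices inducing the \emph{same} chain are mutually redundant, so that minimality of $L$ (not merely its being a cover) is exactly what rules out a collapse. It is worth double-checking the degenerate sub-case where the common value $\alpha_{\searchset}^{0}(\ell_1) = \alpha_{\searchset}^{0}(\ell_2)$ is the empty chain; there the ``otherwise'' branch above is never triggered (no $\search$ can belong to $\emptyset$), so the removal still preserves coverage, and the same contradiction with minimality goes through. This confirms the argument is robust across all cases and needs no separate treatment of indices that happen to cover no search.
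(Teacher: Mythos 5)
Your proof is correct and follows essentially the same route as the paper's: the upper bound is the pigeonhole/image observation, and the lower bound is obtained by contradiction, showing that two indices mapped to the same chain are mutually redundant so one can be removed from $L$ while preserving the cover, contradicting minimality. The only cosmetic difference is that you verify the coverage-preservation step directly from the definition of $\alpha_{\searchset}^{0}$, whereas the paper routes it through its cover-equivalence lemma (Lemma~\ref{lemma:cover}); both are sound.
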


\subsubsection{Cover Relationship} 
Another important set of lemmata are ones that reason about covers and precedence of covers between 
the index and chain spaces. The first lemma states that $\alpha$ preserves covers in both spaces: 

\begin{lemma}[$\alpha$ and Cover Equivalence]
\label{lemma:cover}
The $\alpha$ of all lexicographical sets that are \textbf{l-covers} are \textbf{c-covers}, i.e., 
\[
\forall L\in\mathbb{L}:\textbf{l-cover}_{\searchset}(L)\Leftrightarrow\textbf{c-cover}_{\searchset}(\alpha_{\searchset}^{1}(L))
\]
\end{lemma}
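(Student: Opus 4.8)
The plan is to prove the biconditional one direction at a time, unfolding the definitions of $\textbf{l-cover}$, $\textbf{c-cover}$, $\alpha_{\searchset}^{0}$, and $\alpha_{\searchset}^{1}$, and exploiting the single pivotal equivalence on which everything hinges: by the very definition of $\alpha_{\searchset}^{0}$, a search $\search \in \searchset$ belongs to the chain $\alpha_{\searchset}^{0}(\ell)$ if and only if $\textbf{prefix}_{|\search|}(\ell) = \search$. Once this equivalence is isolated, both implications reduce to transporting an existential witness between an index $\ell$ and the chain it generates.

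For the forward direction, I would assume $\textbf{l-cover}_{\searchset}(L)$ and fix an arbitrary $\search \in \searchset$. The hypothesis supplies an index $\ell \in L$ with $\textbf{prefix}_{|\search|}(\ell) = \search$; by the pivotal equivalence this gives $\search \in \alpha_{\searchset}^{0}(\ell)$, and since $\alpha_{\searchset}^{1}(L) = \{\alpha_{\searchset}^{0}(\ell) \mid \ell \in L\}$ contains $\alpha_{\searchset}^{0}(\ell)$ as a member, the chain $c = \alpha_{\searchset}^{0}(\ell)$ witnesses coverage of $\search$. As $\search$ was arbitrary, this is exactly $\textbf{c-cover}_{\searchset}(\alpha_{\searchset}^{1}(L))$.

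For the backward direction, I would assume $\textbf{c-cover}_{\searchset}(\alpha_{\searchset}^{1}(L))$ and again fix $\search \in \searchset$. This yields a chain $c \in \alpha_{\searchset}^{1}(L)$ with $\search \in c$. Because $\alpha_{\searchset}^{1}(L)$ is by definition the image of $L$ under $\alpha_{\searchset}^{0}$, every such $c$ equals $\alpha_{\searchset}^{0}(\ell)$ for some $\ell \in L$; substituting and re-applying the pivotal equivalence recovers $\textbf{prefix}_{|\search|}(\ell) = \search$, which is precisely the witness demanded by $\textbf{l-cover}_{\searchset}(L)$.

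I do not anticipate a genuine obstacle, since the statement is essentially a definitional unfolding, but the one point requiring care is the surjectivity-style step in the backward direction: I must insist that each chain in $\alpha_{\searchset}^{1}(L)$ arises from some generating index $\ell \in L$ rather than appearing spuriously. This is guaranteed because $\alpha_{\searchset}^{1}$ is defined as a direct-image set, so no chain lacks a preimage. It is also worth noting that the equivalence holds for every $L \in \mathbb{L}$ without any minimality or covering assumption, which is what later permits this lemma to be combined with the cardinality lemmas to transfer optimal solutions between the index and chain spaces.
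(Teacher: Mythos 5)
Your proposal is correct and matches the paper's own proof, which is exactly the same definitional unfolding: the chain of equivalences $\textbf{prefix}_{|\search|}(\ell)=\search \Leftrightarrow \search\in\alpha_{\searchset}^{0}(\ell)$ followed by transporting the existential witness through the direct image $\alpha_{\searchset}^{1}(L)$. The paper presents it as a single chain of biconditionals rather than two separate implications, but the content is identical.
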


On the other hand, the lemma for $\gamma$ is weaker. The cover property is 
preserved only from \textbf{c-covers} to \textbf{l-covers}, not the other way around. We conjecture that both directions 
hold for minimal chain covers, however, we do not prove this as it is not required for our 
approach.

\begin{lemma}[$\gamma$ and Cover Implication]
\label{lemma:gamma-cover}
If a chain set is a c-cover then all lexicographical order sets in the $\gamma$ of the chain cover 
is a \textbf{l-cover}, i.e.,
\[
\forall C\in\mathbb{C}:\textbf{c-cover}_{\searchset}(C)\Rightarrow\forall L\in\gamma_{\searchset}^{1}(C):\textbf{l-cover}_{\searchset}(L)
\]
\end{lemma}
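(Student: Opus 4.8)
The plan is to unfold both predicates and reduce the claim to a single prefix identity that follows directly from the construction of $\gamma^0_\searchset$. Fix an arbitrary chain set $C \in \mathbb{C}$ satisfying $\textbf{c-cover}_\searchset(C)$, and fix an arbitrary index set $L \in \gamma^1_\searchset(C)$. To establish $\textbf{l-cover}_\searchset(L)$, I would take an arbitrary search $\search \in \searchset$ and exhibit an index $\ell \in L$ with $\textbf{prefix}_{|\search|}(\ell) = \search$.

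First I would use the hypothesis $\textbf{c-cover}_\searchset(C)$: by definition there exists a chain $c \in C$ with $\search \in c$. Write this chain as $c \equiv \search_1 \subset \search_2 \subset \ldots \subset \search_k$, so that $\search = \search_j$ for some $j$ with $1 \le j \le k$. Next I would use the structure of $\gamma^1_\searchset(C)$: since $L \in \gamma^1_\searchset(C) = \{\{\ell_1,\ldots,\ell_k\} \mid \ell_i \in \gamma^0_\searchset(c_i)\}$, the set $L$ contains exactly one index $\ell$ drawn from $\gamma^0_\searchset(c)$ for this particular chain $c$. By Definition~\ref{def:gamma}, every such index has the form $\ell = \langle \search_1 \prec \search_2 - \search_1 \prec \ldots \prec \search_k - \search_{k-1}\rangle$, where the attribute blocks are concatenated in chain order.

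The crux is then the prefix identity. Because the chain is strictly increasing, $|\search_j| = \sum_{i=1}^{j}|\search_i - \search_{i-1}|$ (with $\search_0 = \emptyset$), so the first $|\search_j|$ attributes of $\ell$ are exactly the attributes appearing in the blocks $\search_1, \search_2 - \search_1, \ldots, \search_j - \search_{j-1}$. Taken as a set, their union telescopes to $\search_j$, giving $\textbf{prefix}_{|\search_j|}(\ell) = \search_j = \search$. Since $\search$ was arbitrary, $\textbf{l-cover}_\searchset(L)$ holds, and since $C$ and $L$ were arbitrary, the implication follows.

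The main obstacle I anticipate is purely bookkeeping rather than conceptual: one must verify carefully that the prefix length $|\search|$ lines up with a block boundary of $\ell$ (not the interior of a block), which relies on the blocks being the set-differences $\search_i - \search_{i-1}$ and on strict containment along the chain so that no block is empty and the cumulative sizes hit exactly $|\search_j|$. Establishing this telescoping of block sizes and the resulting union identity is where I would spend the care; everything else is a direct unfolding of the definitions of $\textbf{c-cover}$, $\gamma^0_\searchset$, $\gamma^1_\searchset$, and $\textbf{prefix}_k$. Note that the reverse direction genuinely fails, because an $\textbf{l-cover}$ index set need not arise as the $\gamma$-image of any single chain set, which is consistent with the lemma asserting only one direction.
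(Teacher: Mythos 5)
Your proposal is correct and follows essentially the same route as the paper: unfold $\textbf{c-cover}$ to obtain a chain $c$ containing the given search, pick the index $\ell \in \gamma^0_{\searchset}(c)$ contributed to $L$, and show that $\ell$ covers every search in $c$. In fact your explicit telescoping argument for $\textbf{prefix}_{|\search_j|}(\ell)=\search_j$ supplies the justification for the one step the paper merely asserts (namely that $\ell_i\in\gamma^0_{\searchset}(c_i)$ implies $c_i\subseteq\alpha^0_{\searchset}(\ell_i)$), so your write-up is, if anything, more complete than the published proof.
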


\subsubsection{Minimum Cover Relationship} 
Given the established relationships between cardinality and covers, we can state the following 
minimal cover theorems.

\begin{theorem}[Solution Preservation of $\alpha$]
\label{thm:alpha-optimal}
The $\alpha$ of all optimal lexicographical orders is a optimal chain cover, i.e., 
\[
\forall L\in f_{\searchset}:\alpha^{1}(L)\in g_{\searchset}
\]
\end{theorem}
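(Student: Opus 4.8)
The plan is to verify the two conditions that define membership in $g_{\searchset}$ separately: that $\alpha_{\searchset}^{1}(L)$ is a feasible chain cover, and that it attains the minimum possible cardinality among all c-covers. Both should follow by wiring together the cardinality and cover lemmas already established, using $\alpha$ to transport the solution forward and $\gamma$ as a bridge in the reverse direction to rule out anything smaller.

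First I would dispatch feasibility. Since $L\in f_{\searchset}$, it is in particular a minimal l-cover, so $\textbf{l-cover}_{\searchset}(L)$ holds. The forward direction of the equivalence in Lemma~\ref{lemma:cover} then yields $\textbf{c-cover}_{\searchset}(\alpha_{\searchset}^{1}(L))$ directly, so $\alpha_{\searchset}^{1}(L)$ is a valid chain cover. Next I would fix its size: Lemma~\ref{lemma:set-size} states that $\alpha$ preserves cardinality on optimal index sets, giving $|\alpha_{\searchset}^{1}(L)|=|L|$. What remains, and what I expect to be the crux, is to argue that this common value is exactly the minimum c-cover size, i.e.\ that no c-cover can be strictly smaller than $|L|$.

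For that step I would argue by contradiction, invoking $\gamma$ in the one direction in which its cover property is known to hold. Suppose some $C'\in\mathbb{C}$ with $\textbf{c-cover}_{\searchset}(C')$ satisfies $|C'|<|L|$; note $C'$ must be non-empty, since $|L|\geq 1$ whenever $\searchset\neq\emptyset$ (and the claim is trivial otherwise). By Lemma~\ref{lemma:chaincard}, $\gamma_{\searchset}^{1}(C')$ is then non-empty and every $L'$ in it obeys $|L'|\leq|C'|$; pick any such $L'$. Because $C'$ is a c-cover, Lemma~\ref{lemma:gamma-cover} gives $\textbf{l-cover}_{\searchset}(L')$. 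Hence $L'$ is an l-cover with $|L'|\leq|C'|<|L|$, contradicting the minimality of $L$ in $f_{\searchset}$. Therefore no c-cover beats $|L|$, so $\alpha_{\searchset}^{1}(L)$ is a minimum c-cover and $\alpha^{1}(L)\in g_{\searchset}$.

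The main obstacle is orchestrating the two mappings strictly within their valid directions: $\alpha$ carries the feasible solution and its exact size forward into chain space (Lemmas~\ref{lemma:cover} and~\ref{lemma:set-size}), whereas $\gamma$ may only be used in the $\textbf{c-cover}\Rightarrow\textbf{l-cover}$ direction of Lemma~\ref{lemma:gamma-cover} — which is precisely what the lower-bound argument needs. The only points requiring care are confirming that $\gamma_{\searchset}^{1}(C')$ is non-empty so that a witness $L'$ actually exists, and that the strict inequality $|C'|<|L|$ survives composition with the non-strict $|L'|\leq|C'|$; both are immediate from the cited lemmas, so no further calculation is needed.
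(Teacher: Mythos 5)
Your proposal is correct and follows essentially the same route as the paper's own proof: feasibility via Lemma~\ref{lemma:cover}, cardinality preservation via Lemma~\ref{lemma:set-size}, and minimality by contradiction through the chain space. If anything, your minimality step is more carefully justified than the paper's, which attributes the existence of the smaller l-cover $L'$ to Lemma~\ref{lemma:set-size} alone, whereas the argument really needs exactly what you supply --- Lemma~\ref{lemma:chaincard} for non-emptiness of $\gamma_{\searchset}^{1}(C')$ and the bound $|L'|\leq|C'|$, and Lemma~\ref{lemma:gamma-cover} for $\textbf{l-cover}_{\searchset}(L')$.
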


\begin{theorem}[Solution Preservation of $\gamma$]
\label{thm:main2}
For all optimal chain covers there exists an optimal lexicographical order set that is 
optimal i.e.,
\[
\forall C\in g_{\searchset}:\forall L\in\gamma_{\searchset}^{1}(C):L\in f_{\searchset}
\]
\end{theorem}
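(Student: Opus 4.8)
The plan is to verify the two defining properties of membership in $f_{\searchset}$ separately: that every $L\in\gamma_{\searchset}^{1}(C)$ is a feasible \textbf{l-cover}, and that it attains the minimal \textbf{l-cover} cardinality. Feasibility is immediate. Since $C\in g_{\searchset}$ is in particular a \textbf{c-cover}, Lemma~\ref{lemma:gamma-cover} guarantees that $\textbf{l-cover}_{\searchset}(L)$ holds for every $L\in\gamma_{\searchset}^{1}(C)$. So the entire content of the theorem lies in the cardinality claim.

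The crux is to show that the optimal \textbf{l-cover} size and the optimal chain-cover size coincide. First I would pick any optimal \textbf{l-cover} $L^{\star}\in f_{\searchset}$. By Theorem~\ref{thm:alpha-optimal} we have $\alpha_{\searchset}^{1}(L^{\star})\in g_{\searchset}$, and by Lemma~\ref{lemma:set-size} we have $|\alpha_{\searchset}^{1}(L^{\star})|=|L^{\star}|$. Because every element of $f_{\searchset}$ shares the common optimal \textbf{l-cover} cardinality (by the $\arg\min$ definition of $f_{\searchset}$) and every element of $g_{\searchset}$ shares the common optimal chain-cover cardinality, this chain of equalities shows the two optima are equal. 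I denote the shared value $k^{\star}$.

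With $k^{\star}$ in hand the argument closes by a squeeze. Since $C\in g_{\searchset}$, we have $|C|=k^{\star}$, and Lemma~\ref{lemma:chaincard} yields the upper bound $|L|\leq|C|=k^{\star}$. On the other hand, $L$ is a feasible \textbf{l-cover} by the feasibility step, so its cardinality cannot fall below the optimal \textbf{l-cover} size, giving the lower bound $|L|\geq k^{\star}$. Combining the two inequalities forces $|L|=k^{\star}$, so $L$ is a feasible \textbf{l-cover} of minimal cardinality and hence $L\in f_{\searchset}$.

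I expect the only delicate point to be the justification that the two $\arg\min$ optima are genuinely \emph{equal} rather than merely ordered by an inequality; this is precisely where Theorem~\ref{thm:alpha-optimal} (mapping an optimal \textbf{l-cover} to an optimal chain cover) and Lemma~\ref{lemma:set-size} (certifying that $\alpha_{\searchset}^{1}$ is size-preserving on optimal inputs) must be invoked together, since either one alone gives only a one-sided bound. Everything else is a routine squeeze between the $\gamma$-upper bound of Lemma~\ref{lemma:chaincard} and the feasibility lower bound.
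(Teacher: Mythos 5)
Your proof is correct and follows the same overall skeleton as the paper's: feasibility of every $L\in\gamma_{\searchset}^{1}(C)$ via Lemma~\ref{lemma:gamma-cover}, then a squeeze on $|L|$ using the upper bound $|L|\leq|C|$ from Lemma~\ref{lemma:chaincard}. The difference lies in how the matching lower bound is obtained. The paper applies Lemma~\ref{lemma:set-size} directly to the candidate set $L_i$ to get a chain cover $C'=\alpha_{\searchset}^{1}(L_i)$ with $|C'|=|L_i|$, and then argues $|C'|\geq|C|$ by minimality of $C$; but Lemma~\ref{lemma:set-size} is stated only for $L\in f_{\searchset}$, which is exactly what is being proved about $L_i$, so that step is, as written, slightly circular (it can be repaired by noting $|\alpha_{\searchset}^{1}(L)|\leq|L|$ holds for \emph{any} $L$, which is all that is needed there). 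You instead route through Theorem~\ref{thm:alpha-optimal} applied to an arbitrary $L^{\star}\in f_{\searchset}$, where Lemma~\ref{lemma:set-size} legitimately applies, to establish up front that the optimal \textbf{l-cover} and chain-cover cardinalities coincide; the lower bound $|L|\geq k^{\star}$ then follows from feasibility alone. This is a cleaner decomposition that avoids the out-of-hypothesis use of Lemma~\ref{lemma:set-size}, at the modest cost of presupposing $f_{\searchset}\neq\emptyset$ (harmless, since the set of all per-search indices is always an \textbf{l-cover}) and of invoking Theorem~\ref{thm:alpha-optimal}, which the paper's own proof does not need.
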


\begin{figure}[h]
\begin{minipage}{0.47\textwidth}
\centering
\centering
\includegraphics[width=0.88\textwidth]{./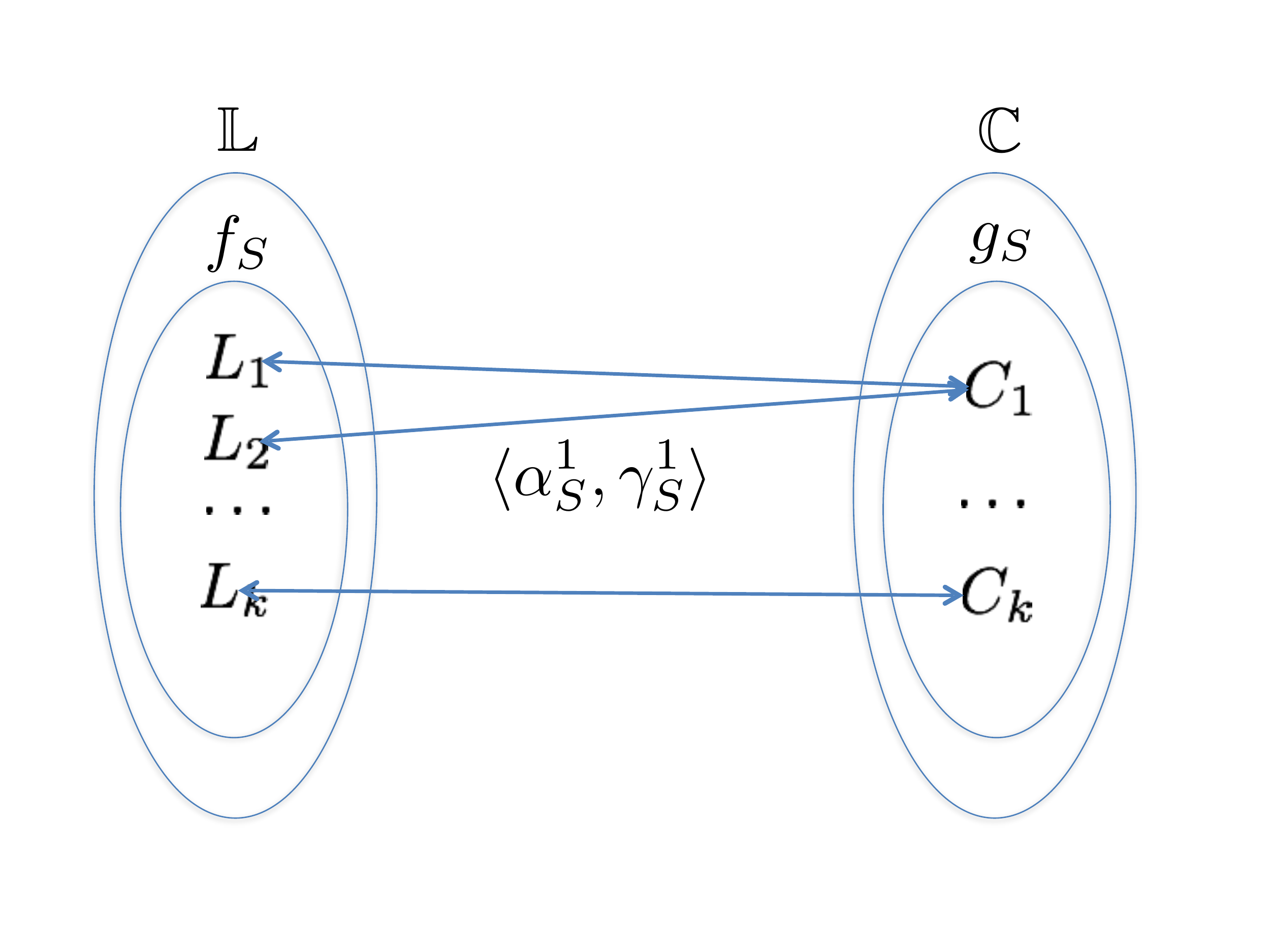}
\caption{Minimum Cover Mapping Between $f_\searchset$ and $g_\searchset$\label{fig:relationship}}
\end{minipage}
\begin{minipage}{0.47\textwidth}
\centering
\centering
\includegraphics[width=0.92\textwidth]{./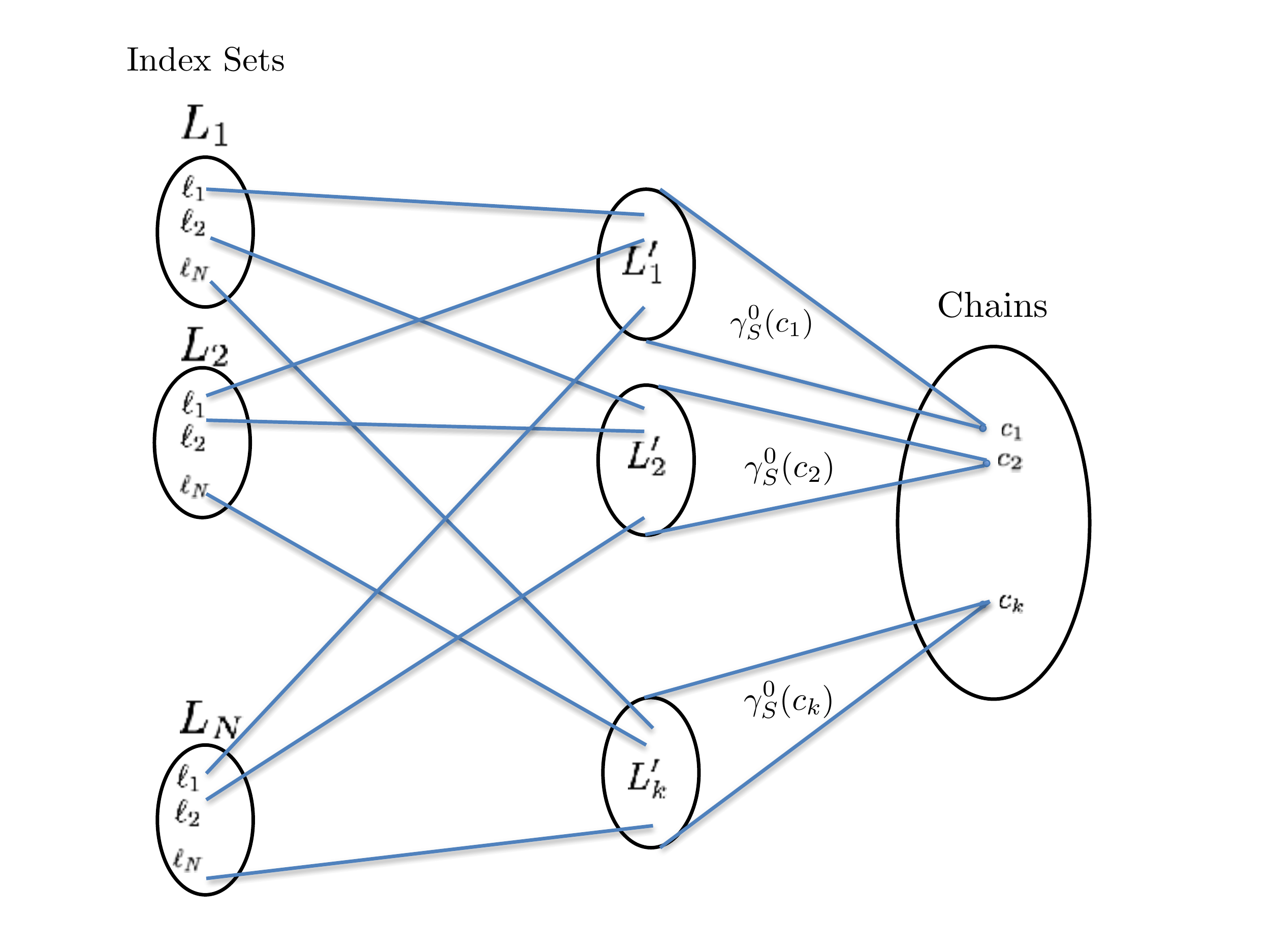}
\caption{Mapping Between Chains and Index Sets\label{fig:chainindexmap}}
\end{minipage}
\end{figure}

The two theorems above establish a clear relationship between the two spaces of solutions. We exploit this 
in our MOSP algorithm.

\subsection{An Optimal MOSP Algorithm}
\label{subsec:algorithm}
To practically apply the theorms in the previous section we introduce a new algorithm that finds 
a minimal set of indices for an instance $(\searchset,A)$ of MOSP. The algorithm follows 
from Theorem~\ref{thm:main2}. However, as we only need a single minimum chain cover, 
our $\gamma_{\searchset}^1$ essentially performs a choice/selection of any given minimum chain cover 
and converts to to a minimum index cover.

We outline the algorithm based on~\cite{Fulkerson56} of finding an optimal solution of MCCP in
Algorithm~\ref{alg:chain-cover} for a search set $\searchset$. First, a bi-partite graph is
constructed whose vertex sets are the search sets in both partitions of
the bi-partite graph (cf.~Line~1). Second, an edge between two searches
$\search$ and $\search'$ is constructed if $\search$ is a strict subset of $\search'$ . The
maximum matching algorithm computes the matching set $E$. Chains are
constructed from the matching set by finding the searches that start a
chain, i.e., are the smallest element of a chained and do not have a a
predecessor. In the algorithm such elements are found in line~4. The
smallest element connects via edges immediately and intermediately all
searches in the chain. The chain is added to the chain set $C$ in
line~6.

Algorithm~\ref{alg:auto-index} converts the chains to an index sets. However, due to the 
relationship summarized in Fig.~\ref{fig:chainindexmap} a single chain may produce several indices. To 
simply the algorithm and due to only for our problem requiring a single index set we 
arbitrarily choose a single index from a chain with $\textbf{Choose}$ operation.

\SetKwProg{Fn}{Function}{\string:}{}
\begin{algorithm}[t]
\Fn{\textbf{MinChainCover}(\searchset)}{
 \KwData{Set of Searches $\searchset$}
 \KwResult{Minimum Chain Cover $\mathcal C$}
  $E := \textbf{MaxMatching}(\searchset,\searchset, \{ (\search,\search') \in \searchset \times \searchset \mid \search \subset \search' \})$\;
 initialize ${\mathcal C}$ to  the empty set\;
 \ForAll{$u_1\in \searchset: \not \exists (u_0,u_1) \in E$}{
   {
      find max.~set $\{(u_1,u_2), (u_2,u_3), \ldots, (u_{k-1},u_k)\}  \subseteq E$ \;
      add $u_1 \subset u_2 \subset u_3 \ldots \subset u_{k-1} \subset u_k $ to ${\mathcal C}$ \;
  }
}
\Return{$\mathcal C$}
}
\caption{Find Minimum Chain Cover for Searches\label{alg:chain-cover}}
\end{algorithm}

\begin{algorithm}[t]
 \Fn{\textbf{MinIndex}(\searchset)}{
   \KwData{Set of Searches $\searchset$}
   \KwResult{Set of indices to cover searches}
 ${\mathcal C} := \textbf{MinChainCover}(\searchset)$\;
 initialize $L$ to the empty set\;
 \ForAll{$\search_1 \subset \search_2 \ldots \subset \search_{k-1} \subset \search_k   \in {\mathcal C}$}{
   add $\textbf{Choose}\left( \search_1 \prec \search_2 - \search_1 \prec  \ldots \prec \search_{k} - \search_{k-1}\right)$  to $L$\;
} 
\Return{$L$}}
\caption{Compute Minimal Index Set \label{alg:auto-index}}
\end{algorithm}

\pgfdeclarelayer{background}
\pgfdeclarelayer{foreground}
\pgfsetlayers{background,main,foreground}

\begin{figure*}[t]
\footnotesize
\centering
\subfigure[Matching Problem]{
\begin{tikzpicture}[semithick,
  every node/.style={draw,circle},
  fsnode/.style={fill=myblue},
  ssnode/.style={fill=mygreen},
  every fit/.style={ellipse,draw,inner sep=-2pt,text width=2cm},
  -,shorten >= 3pt,shorten <= 3pt
]
\begin{scope}[start chain=going below,node distance=7mm]
  \foreach \i/\t in {1/\{x\},2/\{x\com y\},3/\{x \com z\},4/\{x\com y \com z\}}
           { \node[fsnode,on chain] (f\i)
                [label={[left,xshift=0mm,yshift=-2.0mm]{\footnotesize $\t$}}] {}; }
\end{scope}
\begin{scope}[xshift=3.5cm,start chain=going below,node distance=7mm]
\foreach \i/\t in {1/\{x\},2/\{x\com y\},3/\{x \com z\},4/\{x\com y \com z\}} {
  \node[ssnode,on chain] (s\i) [label={[right,xshift=0mm,yshift=-2.0mm]{\footnotesize $\t$}}] {};
}
\end{scope}
\node [myblue,fit=(f1) (f4),label=above:$\searchset$] {};
\node [mygreen,fit=(s1) (s4),label=above:$\searchset$] {};
\draw [line width=0.4mm] (f1) -- (s2);
\draw [line width=0.4mm] (f1) -- (s3);
\draw [line width=0.4mm] (f1) -- (s4);
\draw [line width=0.4mm] (f2) -- (s4);
\draw [line width=0.4mm] (f3) -- (s4);
\end{tikzpicture}
\label{fig:example-problem}
}
\subfigure[Matching Solution]{
\begin{tikzpicture}[thick,
  every node/.style={draw,circle},
  fsnode/.style={fill=myblue},
  ssnode/.style={fill=mygreen},
  every fit/.style={ellipse,draw,inner sep=-2pt,text width=2cm},
  -,shorten >= 3pt,shorten <= 3pt
]
\begin{scope}[start chain=going below,node distance=7mm]
  \foreach \i/\t in {1/\{x\},2/\{x\com y\},3/\{x \com z\},4/\{x\com y \com z\}}
           { \node[fsnode,on chain] (f\i)
                [label={[left,xshift=0mm,yshift=-2.0mm]{\footnotesize $\t$}}] {}; }
\end{scope}
\begin{scope}[xshift=3.5cm,start chain=going below,node distance=7mm]
\foreach \i/\t in {1/\{x\},2/\{x\com y\},3/\{x \com z\},4/\{x\com y \com z\}} {
  \node[ssnode,on chain] (s\i) [label={[right,xshift=0mm,yshift=-2.0mm]{\footnotesize $\t$}}] {};
}
\end{scope}
\node [myblue,fit=(f1) (f4),label=above:$\searchset$] {};
\node [mygreen,fit=(s1) (s4),label=above:$\searchset$] {};
\draw [line width = 0.4mm, brown] (f1) -- (s2);
\draw [line width = 0.4mm, brown] (f2) -- (s4);
\end{tikzpicture}
\label{fig:example-solution}
}
\subfigure[Chain Cover]{
\begin{tikzpicture}[thick,
  every node/.style={draw,circle},
  fsnode/.style={fill=myred},
  every fit/.style={ellipse,draw,inner sep=-2pt,text width=2cm},
  -,shorten >= 3pt,shorten <= 3pt
]
\node[fsnode] at (2,3.6)(n1)
       [label={[left,xshift=+1mm,yshift=2mm]{\footnotesize $\{x\}$}}] {};
\node[fsnode] at (1.4,1.8)(n2)
       [label={[left,xshift=+1mm,yshift=2mm]{\footnotesize $\{x,y\}$}}] {};
\node[fsnode] at (2.6,1.8)(n3)
       [label={[right,xshift=-1mm,yshift=2mm]{\footnotesize $\{x,z\}$}}] {};
\node[fsnode] at (2,0)(n4)
       [label={[left,xshift=+1mm,yshift=-5mm]{\footnotesize $\{x,y,z\}$}}] {};
\draw [line width = 0.4mm] (n1) -- (n2);
\draw [line width = 0.4mm] (n1) -- (n3);
\draw [line width = 0.4mm] (n2) -- (n4);
\draw [line width = 0.4mm] (n3) -- (n4);
\begin{pgfonlayer}{background}
        \draw[rounded corners=2.4em,line width=2.4em,blue!20,cap=round]
                (n1.center) -- (n2.west) -- (n4.center);
\end{pgfonlayer}
\begin{pgfonlayer}{background}
        \draw[rounded corners=2.4em,line width=2.4em,blue!20,cap=round]
                (n3.north) -- (n3.south);
\end{pgfonlayer}
\end{tikzpicture}
\label{fig:example-chaincover}
}
\caption{Motivating Example of Fulkerson's Maximum Matching Reduction
  for Dilworth's Theorem. 
  Partial ordered set is the set of searches
  $\{x\}$, $\{x,y\}$, $\{x,z\}$, and $\{x,y,z\}$ of relation
  $A$. Bi-partite construction and solution of maximum matching
  problem induce a minimal chain cover. The chains induce minimal
  number of indices $x \prec y \prec z$ and $x \prec z$. 
}
\label{fig:mpsp}
\end{figure*}
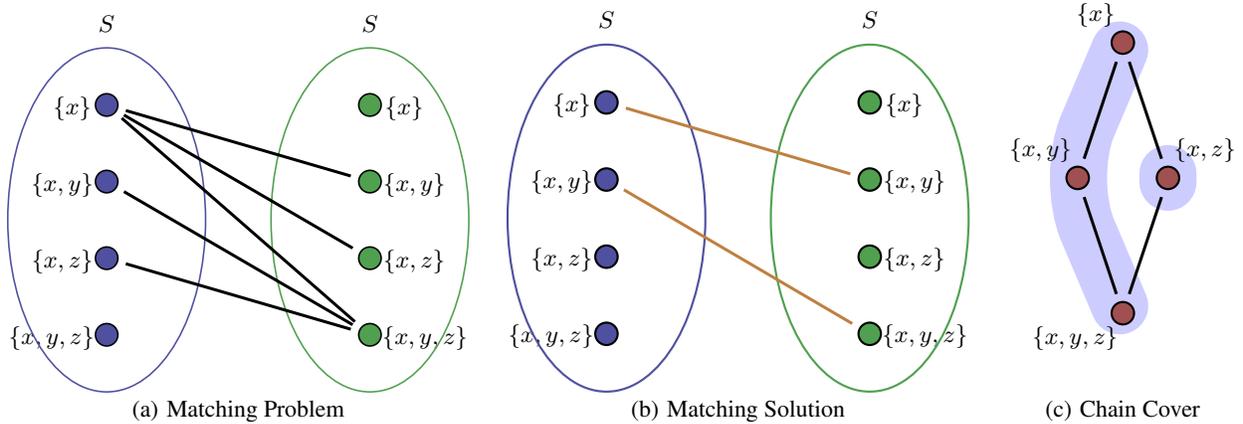

\begin{example}[Motivating Example (Cont.)]
Consider the motivating example that has the following search set,
$
 \searchset=\{ \{x\}, \{x,y\}, \{x,z\}, \{x,y,$ $z\}\}
$
that needs to be covered by the smallest set of indices. First, we
construct a bi-partite graph with nodes in $\searchset$ in both partitions. The
edge set is given by the strict subset relationship between a search
pair, i.e., $ (\{x\}, \{x,y\}), (\{x\}, \{x,z\}), (\{x\}, $ $\{x,y,z\}),
(\{x,y\}, \{x,y,z\})$ and $(\{x,z\}, \{x,y,z\})$.  The bi-partite
graph is depicted in Fig.~\ref{fig:example-problem}, and the
matching set of the maximal matching solution is depicted in
Fig.~\ref{fig:example-solution}. The solution of the maximal
matching algorithm is given by the matching set,
$
E = \{ (\{x\}, \{x,y\}), (\{x,y\}, \{x,y,z\}) \}.
$
With Algorithm~\ref{alg:chain-cover} we obtain a chain cover $C$
containing the following chains,
$ \{x\} \subset \{x,y\} \subset \{x,y,z\} $ and $ \{x,z\}$ 
that is depicted in Fig.~\ref{fig:example-chaincover}. It is apparent
that the two chains are minimal for the cover since the cardinality of
the maximum anti-chain (i.e., $\{\{x,y\}, \{x,z\}\}$) is also two
(cf.~Dilworth's Theorem).

The Algorithm~\ref{alg:auto-index} converts the chain cover to
indices using a $\gamma$ transformation. The first chain is converted as follows:
$
   \{ x \} \subset \{x,y\} \subset \{x,y,z\}  $ $ \Rightarrow $ $
  \{ x \} \prec \{x,y\} - \{x\} \prec \{x,y,z\} - \{x,y\} $ $ 
  \Rightarrow x \prec y \prec z 
$
Since the smallest element and
the set difference are singletons, there exists only a single
index that covers the searches of the chain. The
second chain consists of a single element $\{x,z\}$. This chain
induces two possible indices, i.e., $x \prec z$ and $z \prec x$, and
the choice is arbitrary to find an optimal solution for the MOSP
problem. An updated index mapping is defined below:

\begin{center}
\footnotesize
  \begin{tabular}{l|l}
Primitive Predicate & Assigned $\ell$ \\
  \hline 
$x = t_1(z)$                         & $x \prec y \prec z$  \\
$x = t_1(y), y = t_1(z)$             & $x \prec y \prec z$   \\
$x = t_1(y), z = t_1(z)$             & $x \prec z$           \\
$x = t_1(z), y = t_1(y), z = t_1(x)$ & $x \prec y \prec z$  \\ 
 \end{tabular}
\end{center}

\end{example}

\section{Datalog Engine Integration}
\label{sec:integrate}
In this section we describe the practicalities of how our technique is integrated into a Datalog engine, using 
\souffle as an example. The purpose of this section is to allow our method to be replicated in other high performance 
Datalog engines\footnote{And in-memory databases in general}~\cite{Hoder11, logicblox, Bierhoff05, Coral, GlueNail, Socialite}.

\begin{figure*}
\begin{center}
  \begin{tikzpicture}[scale=0.3,
  text1/.style={scale=0.7,draw=none,font=\small},
  block/.style={scale=0.7,ultra thick,draw,fill=gray!20,rectangle,minimum width={2cm}, minimum height={1.5cm}, font=\small},
  block2/.style={scale=0.7,ultra thick,draw,fill=yellow!20,rectangle,minimum width={2cm}, minimum height={1.5cm}, font=\small},
  action/.style={scale=0.7,ultra thick,draw,fill=green!10,ellipse,minimum width={1.5cm}, minimum height={1.7cm}, font=\small}
  ]
    \matrix[row sep=.3cm,column sep = .3cm]{
     & & & & &\node[block2](IO0)  {\begin{tabular}{c} {\bf Index Optimizer}\\for Relation $R_1$\end{tabular}}; & \\     
     \node[action](Q) {Query}; &
     \node[block](QP) {\begin{tabular}{c} {\bf Query}\\{\bf Translator}\end{tabular}}; &
     \node[action](LN0) {\begin{tabular}{c} Loop\\Nest\end{tabular}}; &
     \node[block](R) {\begin{tabular}{c} {\bf Search}\\{\bf Rewriter}\end{tabular}}; &
     \node[action](LN) {\begin{tabular}{c} Primitive \\Loop\end{tabular}}; & \node[text1]{$\ldots$}; &
     \node[action](OLN) {\begin{tabular}{c} Range \\Loop\end{tabular}}; \\
     & & & & &\node[block2](IO2)  {\begin{tabular}{c} {\bf Index Optimizer}\\for Relation $R_l$\end{tabular}}; & \\
    }; 
    \path     (Q) edge[->,line width=1.6pt] (QP) 
      (QP) edge[->,line width=1.6pt] (LN0)
      (LN0) edge[->,line width=1.6pt] (R)
      (R) edge[->,line width=1.6pt] (LN)      
      (LN) edge[->,line width=1.6pt] (IO0)
      (LN) edge[->,line width=1.6pt] (IO2) 
      (IO0) edge[->,line width=1.6pt] (OLN)
      (IO2) edge[->,line width=1.6pt] (OLN); 
\end{tikzpicture}
\end{center}
\caption{Query optimisation pipeline for input relations $R_1, \ldots, R_l$: index optimiser is invoked for each input relations separately. }
\label{fig:q-pipeline}
\end{figure*}
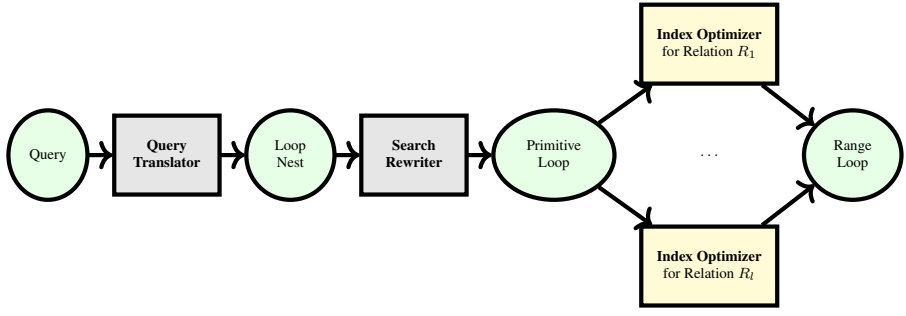

The requirements of our approach (1) that queries for a relational database system are expressed
in a domain specific language, e.g., SQL~\cite{Groff09}, 
Datalog~\cite{Ceri89}, whose underlying query semantics resembles a
relational algebra system~\cite{Codd70,Abiteboul95} employing the 
usual set operators including product, projection, and selection on relations denoted  by $R_1, \ldots, R_l$ 
producing as a result an output relation $R'$. (2) an engine converts joins to a nested loop join, that resembles 
our primitive nested loop joins.

Our approach performs several rewrite transformaitons that we summarize in the pipeline in 
Fig.~\ref{fig:q-pipeline}. In the first step, a \emph{query translator} converts
an input query\footnote{For sake of simplicity we exhibit our approach only for a single query; however 
the approach can be extended to a collection of queries, sub-queries, etc.} to a \emph{nested loop join} 
(also known as \emph{join nested loop join}). A nested loop join represents an executable imperative 
program of the input query constructed by a collection of nested loops. Each loop in the nested loop join enumerates 
tuples of a relation that occur in the input query, and filters tuples according to \emph{loop predicates}. The 
loop body of the most inner loop projects the selected tuples of the loops to a new tuple that will be added 
to the output relation of the query if the tuple does not exists. The nested loop join
is rewritten several times to obtain nested loop joins containing index-operations denoted by 
range nested loop join.

\begin{figure}[ht]
\begin{center}
\begin{adjustbox}{bgcolor=blue!20}
\begin{minipage}[t]{0.7\textwidth}
\begin{tabbing}
\hspace{1.2cm}\=\hspace{0.3cm}\=\hspace{0.3cm}\=\hspace{0.3cm}\=\hspace{0.3cm}\=\hspace{0.3cm}\=\kill 
$\textit{loop}_1$: \> $\textbf{for all} \, t_1 \in \rel_{i_1} : \varphi_1(t_1) \, \textbf{do}$ \\
$\textit{loop}_2$: \>\> $\textbf{for all} \, t_2 \in \rel_{i_2} : \varphi_2(t_1, t_2) \, \textbf{do}$ \\
\>\>\> $\ldots$\\
$\textit{loop}_k$: \>\>\>\> \textbf{for all} $t_k \in \rel_{i_k} : \varphi_k(t_1, t_2, \ldots, t_k)$  \textbf{ do}\\ 
\>\>\>\>\>\textbf{if} $\varphi_k(t_1, \ldots, t_m) \not \in R'$ \textbf{ then} \\
\>\>\>\>\>\>\textbf{add } $\pi(t_1, \ldots, t_m)$ \textbf{ to } $R'$\\
\>\>\>\>\>\textbf{endif}\\
\>\>\>\>\textbf{endfor}\\
\>\>\> $\ldots$\\
\>\>\textbf{endfor}\\
\>\textbf{endfor}
\end{tabbing}
\end{minipage}
\end{adjustbox} 
\end{center}
\caption{Loop-Nest: a relational algebra query is translated by a query planner to nested loop join; each 
loop enumerates tuples of a relation and filters selected tuples $t_1, \ldots, t_j$  using a tuple 
predicate $\varphi_j(t_1, \ldots, t_j)$. The inner most loop projects selected tuples $t_1, \ldots, t_k$ 
to a tuple for the output relation $R'$. If the tuple does not exist, it will be added to the output 
relation.} 
\label{fig:nested loop join}
\end{figure}

The structure of a nested loop join is shown in Fig.~\ref{fig:nested loop join}. The loops of the nested loop join are 
labelled by $\textit{loop}_j$ for all $j$, $1\leq j \leq k$. 
The $j$-th loop selects a tuple $t_j$ from relation $R_{i_j}$ where $i_j$ is used to associate the 
$j$-th loop to one of the input relations $R_1, \ldots, R_l$. The \emph{loop predicate} $\varphi_j(t_1, \ldots, t_j)$ 
is defined over the tuple $t_j$ of the current loop and the selected tuples of the outer loops 
$t_1$, \ldots, $t_{j-1}$. The loop predicate filters the tuples of relation $R_{i_j}$, i.e., only if 
the loop predicate holds for the currently selected tuples $t_1, \ldots, t_{j}$,  the loop body is 
executed with the selected tuples $t_1, \ldots, t_j$.   Note that in some cases, the search predicate 
holds  independent of the elements of the tuples (i.e. represents the true value); for this case, all 
tuples of the relation $R_{i_j}$ are enumerated and the search predicate can be omitted. 

The query translator selects the best loop order, minimising the iteration space of the nested loop join with 
the aid of a query planner~\cite{Abiteboul95} or user hints\footnote{.plan directive in \souffle}. Conditions are 
hoisted to the outer-most loop where they
are still admissible in order to prune the iteration spaces effectively. 
This technique is also referred to as levelling~\cite{SKP15}. After the translation to nested loop joins we 
have two subsequent transformations that converts the nested loop join to a nested loop join using index operations to speed
up the execution of query.

\subsection{Search Rewriter} 
The second step the nested loop join is 
transformed to a nested loop join with 
\emph{primitive searches}, which we refer to as $\searchquery$-nested loop join. 


In 
a subsequent transformation step, a primitive search will be replaced by an 
index operation on relation $R$. Hence, 
a large number of primitive searches in the nested loop join will make the execution 
of the query more efficient.  The rewriting of the nested loop join to a $\searchquery$-nested loop join is mainly a 
syntactical rewrite step and is shown in Figure~\ref{fig:nested loop join-rewrite}.
\begin{figure}
\begin{center}
\adjustbox{bgcolor=blue!20}{
\begin{minipage}[b]{0.4\textwidth}
\begin{tabbing}
\hspace{1.0cm}\=\hspace{0.2cm}\=\hspace{0.2cm}\=\hspace{0.3cm}\=\hspace{0.3cm}\=\hspace{0.3cm}\=\kill 
$\textit{loop}_j$: \> $\textbf{for all} \, t_j \in \rel_{i_j} : \varphi_j(t_1, \ldots, t_j) \, \textbf{do}$ \\
\>\>\> $\ldots$\\
\>\textbf{endfor}
\end{tabbing}
\end{minipage}
}\\
\vspace{3mm}
$\Downarrow$\\
\vspace{3mm}
\adjustbox{bgcolor=blue!20}{
\begin{minipage}[b]{0.4\textwidth}
\begin{tabbing}
\hspace{1.0cm}\=\hspace{0.2cm}\=\hspace{0.2cm}\=\hspace{0.3cm}\=\hspace{0.3cm}\=\hspace{0.3cm}\=\kill 
$\textit{loop}_j$: \> $\textbf{for all} \, t_j \in \searchquery_{\varphi'_j}(\rel_{i_j}) : \varphi''_1(t_1, \ldots, t_j) \, \textbf{do}$ \\
\>\>\> $\ldots$\\
\>\textbf{endfor}
\end{tabbing}
\end{minipage}
}
\end{center}
\caption{Rewriting of loops in nested loop join using primitive searches: loop predicate $\varphi_j$ of the $j$-th loop is replaced by a primitive search with search predicate $\varphi'_j$ and remaining loop predicate $\varphi''_j$.}
\label{fig:nested loop join-rewrite}
\end{figure}
The primitive nested loop join enumerates tuples via the primitive searches, i.e., 
the original condition $\varphi_j$ of the $j$-th loop is broken up
into a search predicate $\varphi'_j$ consisting of a conjunction of equality predicates along 
with the remaining predicate $\varphi''_j$, i.e., $\varphi_j = \varphi'_j \wedge \varphi''_j$.  
Note that the values of the search predicate $\varphi'_j$ can only be constants or 
tuple elements of outer loops (which would be fixed when executing the $j$-th loop); otherwise it cannot be replaced by an index operation.  
For complex loop predicates there might still be a reduced loop predicate $\varphi''_j$ that needs to be evaluated for all tuples generated by 
the primitive search. 

Each search predicate is replaced by an index operation to
reduce the loop-iteration space further. Note that if no primitive search 
predicate can be identified in $\varphi_j$, the search predicate $\varphi'_j$
becomes empty and holds for all tuples, i.e., $\searchquery_{\textit{true}}(R_{i_j})$.
Such a pathological case of a primitive search can be rewritten again 
to the relation $R_{i_j}$ itself. 
 
\subsection{Index Optimizer} 
The final step the $\searchquery$ nested loop join is converted 
to range searches. In our approach indices are associated to a single relation only,
hence, the index optimisation is performed separately for each input relation. As 
described in previous sections a lexicographical order is required
and the index optimiser chooses the minimal number of lexicographical 
orders. 
\raggedbottom
\begin{example}[Motivating Example (Cont.)]
Recall the Datalog rule from the motivating example.
\begin{align*}
B(r,p,q) \leftarrow & A(r,p,q), A(q,\_,\_), \\ & A(p,q,\_), A(p,\_,q), A(q,p,r).
\end{align*}

The query translator generates the nested loop join for the input query. The order
of the loops are chosen such that the iteration space is as small as possible
and the conditions are levelled, i.e., hoisted to the outer most loop. For example,
the query planner of the query translator could choose following nested loop join 
for the input query,
\begin{center}
\begin{adjustbox}{bgcolor=green!10}
\begin{minipage}{0.8\textwidth}
\begin{tabbing}
\hspace{0.9cm}\=\hspace{0.1cm}\=\hspace{0.1cm}\=\hspace{0.1cm}\=\hspace{0.1cm}\=\hspace{0.1cm}\=\hspace{0.1cm}\=\hspace{0.1cm}\=\hspace{0.1cm}\=\hspace{0.1cm}\=\kill 
$\textit{loop}_1$: \> $\textbf{for all} \, t_1 \in A  \, \textbf{do}$ \\
$\textit{loop}_2$: \>\> $\textbf{for all} \, t_2 \in A : t_2(x) = t_1(z) \, \textbf{do}$ \\
$\textit{loop}_3$: \>\>\> $\textbf{for all} \, t_3 \in A : t_3(x) = t_1(y) \wedge t_3(y) = t_1(z)  \, \textbf{do}$ \\
$\textit{loop}_4$: \>\>\>\> $\textbf{for all} \, t_4 \in A : t_4(x) = t_1(y)\wedge t_4(z) = t_1(z) \, \textbf{do}$ \\
$\textit{loop}_5$: \>\>\>\>\> $\textbf{for all} \, t_5 \in A : t_5(x) = t_1(y)$ \\ 
$\                $\>\>\>\>\>$\wedge t_5(y) = t_1(y)\wedge t_5(z) = t_1(x) \, \textbf{do}$ \\
\>\>\>\>\>\>\textbf{if} $ t_1 \not \in B$ \textbf{ then} \\
\>\>\>\>\>\>\>\textbf{add } $t_1$ \textbf{ to } $B$\\
\>\>\>\>\>\>\textbf{endif}\\
\>\>\>\>\>\textbf{endfor}\\
\>\>\>$\ldots$\\
\>\textbf{endfor}
\end{tabbing}
\end{minipage}
\end{adjustbox}
\end{center}
where for each relation instance a loop in the nested loop join is generated. Where there is a variable that binds two attributes together, this is translated to loop 
predicate $t_2(x) = t_1(z)$. The outer most loop for this condition is the second outer most loop since it requires 
information of the tuple element $t_2(x)$ which only stabilises in the second outer-most loop. 
Other conditions of the example are placed accordingly. The projection function $\pi(t_1,\ldots, t_5)$ maps the tuples $t_1, \ldots, t_5$ to $t_1$, and the if-statement checks whether the tuple $t_1$ exists in the output relation $B$. If it does, it adds the tuple to $B$. 

The nested loop join is further transformed by the search rewriter to a $\searchquery$-nested loop join replacing loop predicates by primitive searches where possible. Since all loop predicates are conjunction of equality predicates, they can all be converted to search predicate of primitive searches. Hence, the output of the search rewriter is given below
\begin{center}
\begin{adjustbox}{bgcolor=green!10}
\begin{minipage}{0.4\textwidth}
\begin{tabbing}
\hspace{0.9cm}\=\hspace{0.1cm}\=\hspace{0.1cm}\=\hspace{0.1cm}\=\hspace{0.1cm}\=\hspace{0.1cm}\=\hspace{0.1cm}\=\hspace{0.1cm}\=\hspace{0.1cm}\=\hspace{0.1cm}\=\kill 
$\textit{loop}_1$: \> $\textbf{for all} \, t_1 \in A  \, \textbf{do}$ \\
$\textit{loop}_2$: \>\> $\textbf{for all} \, t_2 \in \searchquery_{x = t_1(y)}(A) \, \textbf{do}$ \\
$\textit{loop}_3$: \>\>\> $\textbf{for all} \, t_3 \in \searchquery_{x = t_1(y), y = t_1(z)}(A)  \, \textbf{do}$ \\
$\textit{loop}_4$: \>\>\>\> $\textbf{for all} \, t_4 \in \searchquery_{x = t_1(y), z = t_1(z)}(A) \, \textbf{do}$ \\
$\textit{loop}_5$: \>\>\>\>\> $\textbf{for all} \, t_5 \in \searchquery_{x = t_1(z), y = t_1(y), z = t_1(x)}(A) \, \textbf{do}$ \\
\>\>\>\>\>\>\textbf{if} $ t_1 \not \in B$ \textbf{ then} \\
\>\>\>\>\>\>\>\textbf{add } $t_1$ \textbf{ to } $B$\\
\>\>\>\>\>\>\textbf{endif}\\
\>\>\>\>\>\textbf{endfor}\\
\>\>\>$\ldots$\\
\>\textbf{endfor}
\end{tabbing}
\end{minipage}
\end{adjustbox}
\end{center}
for which the remaining loop predicate $C''_1,\ldots,C''_5$ hold independent of the 
tuples $t_1, \ldots, t_5$, and are omitted in the $\searchquery$-nested loop join above.

In the last step, the primitive searches are rewritten to range searches. The 
conversion from a $\searchquery$-nested loop join to a $\rangequery$-nested loop join
is performed by the index-optimiser. The index optimiser performs the 
conversion for each input relation separately. Since we only have  one 
input relation (i.e. relation $A$) for our input query, we have only one
invocation of the index optimizer. 
\end{example}


\section{Experiments}
\label{sec:experiments}
In this section we evaluate our technique, which we referred to as \emph{Auto-Indexing}. We have implemented  
our approach in the \souffle Datalog-based analysis tool. We perform two sets of experiments where we compare 
auto-indexing with a na\"ive index selection where we select index orders based on syntactical delcarations of 
relations. We have ommited experiments with no indexing as it results in a time-out in all the experiments and thus 
confirms the need for the MIA for any reasonably sized program analysis benchmark.

The first set of experiments are conducted on data sets extracted from the DaCapo06~\cite{DaCapo} and Julia~\cite{Julia} 
program benchmarks\footnote{availible from https://bitbucket.org/yanniss/doop-benchmarks} that represent medium to large programs. For each dataset we perform various points-to analyses taken 
from the Doop~\cite{Doop} program analysis framework. The Doop analyses are  of varying difficulty ranging from the least 
computational heavy context insensitive analyses (ci, ci+, ci++) and the more computationaly difficult a context-sensitive 
analyses (o1sh, o2s2h, 3os3h).

The second set of experiments are performed on two very large data sets from industry (OpenJDK 7) and four industrial 
analyses. Again we compare auto-indexing with a na\"ive selection.  Again, we have ommited  experiments with 
no-indexing. Such large use cases are the primary target of our approach and the catalyst for the development of 
\souffle and are typically too large, or at the very least, among the most difficult problems for most Datalog-based 
tools. We demonstrate that our technique is paramount to \souffle being able to perform such analyses on giga-scale 
datasets in a practical amount of time. Moreover, we show that the our reported times perform on a par with the reported 
results in~\cite{OOPSLA15} on the same analysis and dataset. For compressions of \souffle to other tools we refer the reader 
to~\cite{CAV, CC}.

\subsection{Doop Analysis Benchmarks}
The results of the improvements of \souffle our technique (Auto)  compared to 
\souffle with na\"ive orderings are summarized in Fig.~\ref{fig:doop1} for run-time improvements and in 
\ref{fig:doop2} for memory consumption improvements. These experiments were conducted on a 2.6GHz Intel(R) 
Core i5-3320M with 8GB physical RAM.

\begin{figure*}
\centering
\begin{minipage}[b]{0.47\hsize}
\centering
  \footnotesize 
   \begin{tikzpicture}[x=1.5cm,y=1.5cm]
    \begin{axis}[
		    title=Auto-Index Scheme Run-Time Speedup vs Na\"ive Scheme,
		    xlabel=Doop Analyses,
		    xmin=0.5, xmax=6.5,
		    ylabel=Auto-Index Relative Speedup,
		    ymin=0.8, ymax=4.1,
                    axis y line*=left,
                    axis x line*=bottom,
                    xticklabels={ci, ci+, ci++,1osh, 2os2h, 3os3h},xtick={1,...,6},
                    x tick label style={rotate=90,anchor=east},
		    grid=major,
	            legend entries={DaCapo06, Julia},
                    legend style={at={(0.2,0.8)} , font=\tiny}],

	    ]
            \addlegendentry{DaCapo06}
	    \addplot+[
		    only marks,
		    mark=+,
		     mark size=3,
	    ] table {data/dacapo.dat};
            \addlegendentry{Julia}
	    \addplot+[
		    only marks,
		    mark=x,
		     mark size=3,
	    ] table {data/juliart.dat};
     \addplot[mark=none, dashed] coordinates {(0,4.0) (6.2,4.0)};
     \node[above] at (3.5, 3.7) {Na\"ive timeout};
    \end{axis}
    \end{tikzpicture}
    \caption{Relative Run-Time Speedup on Doop Benchmarks\label{fig:doop1} where timeout is set to 90 minutes.}
\end{minipage}
\begin{minipage}[b]{0.47\hsize}
\centering
  \footnotesize 
   \begin{tikzpicture}[x=1.5cm,y=1.5cm]
    \begin{axis}[
		    title=Auto-Index Scheme Memory Improvement vs Na\"ive Scheme,
		    xlabel=Doop Analyses,
		    xmin=0.5, xmax=6.5,
		    ylabel=Auto-Index Relative Memory Improvement,
		    ymin=0.8, ymax=1.8,
                    axis y line*=left,
                    axis x line*=bottom,
                    xticklabels={ci, ci+, ci++,1osh, 2os2h, 3os3h},xtick={1,...,6},
                    x tick label style={rotate=90,anchor=east},
		    grid=major,
	            legend entries={DaCapo06, Julia},
                    legend style={legend pos=north west,font=\tiny}]
	    ]
            \addlegendentry{DaCapo06}
	    \addplot+[
		    only marks,
		    mark=asterisk,
		     mark size=3,
	    ] table {data/dacapomem.dat};
            \addlegendentry{Julia}
	    \addplot+[
		    only marks,
		    mark=x,
		     mark size=3,
	    ] table {data/juliamem.dat};

    \end{axis}
    \end{tikzpicture}
    \caption{Relative Memory Usage Improvement on Doop Benchmarks\label{fig:doop2} with timed out data points ommited.}
\end{minipage}
\end{figure*}

The results contain data points for each program in the respective benchmark dataset for a particular points-to analysis 
(x-axis) and the relative improvement of auto-indexing comapred to naive indexing (y-axis). 

The results exlusively demonstrate improvements in both run-time and memory usage. The results indicate that both run-times and 
memory reductions increase with more difficult datasets, with reasonable index reductions of approx. 15\%. For the Dacapo06 
dataset we see average memory and run-time improvements of approx. 15\% on simpler analyses to approx. 35\%, for the more difficult 
context sensitive analyses with exception to the run-times of 3os3h where larger DaCapo benchmarks result in significant run-time 
degredations due to the analyses reaching close to the full RAM capacity. The Julia benchmarks largely resulted in timeouts 
using the na\"ve configuration while auto-index was able to compute a majority of the benchmarks on the availible computing power.

\subsection{Large Scale Industrial Case Study : OpenJDK 7}
In the following set of experiments we present a very large scale industrial use case. We evaluate 
\souffle with our approach on several industrial-scale sized analyses and very large inputs from the
OpenJDK 7 codebase. These sets of experiments were conducted on a server node with 18 core 
Intel(R) Xeon(R) CPU E5-2699 v3 at 2.30GHz and with 378GB of physical RAM. 

In Table~\ref{tab:quantitative} a set of industrial analysis are outlined along with their 
number of rules and relations. The first analysis, ci-nocfg, refers to a context insensitive 
points-to analysis without prior call graph construction. This represents the simplest 
analysis. The next analysis, ci-pts, which represents a context insensitive points-to analysis 
with call graph construction. The second last analysis, cs-pts, is a context sensitive points-to 
that is typically regarded as too difficult for Datalog-based solvers for such large data sets. The 
last analysis, security, is a security analysis, similar to~\cite{OracleSec} that is performed to ensure the 
security of the Java JDK. 

In Table~\ref{tab:problemsizes}, we outline the number of code characteristics relevant to our 
analysis for both industrial code bases. We use two data sets (also known as EDBs) for performing the static program analysis on relational representations of the Java Development Kits\footnote{Java and JDK are registered trademarks of Oracle and/or its affiliates. Other names may be trademarks of their respective owners.} (JDK) versions 7 as well as the 
Java package source code\footnote{The Java package source code is a subset of JDK including all sub-packages in \texttt{java.*}.}. The large dataset is the OpenJDK 7 b147 library that has 1.4 million program variables, whereas the small data-set has 210 thousand program variables. The algorithmic complexity of the static program analyses used in this paper vary but have at least a cubic worst-case complexity in the number of program variables.  For the OpenJDK 7, and 
the Java package, the output relation sizes can have up to giga-tuples of data with several relations 
containing hundreds of attributes. 

\begin{table}
\centering
\footnotesize
\begin{tabular}{l|r|r|r|r|r}
Benchmark & \# rul & \# rel  \\
\hline
ci-nocfg & 34  & 48  \\
ci-pts       & 160 & 260 \\
cs-pts       & 174 & 292 \\
security     & 359 & 600 
\end{tabular}
\caption{Quantitative Statistics of Analyses}
\label{tab:quantitative}
\end{table}

\begin{table}
\centering
\footnotesize
\begin{tabular}{l|r|r}
Input & OpenJDK 7 & Java \\
\hline
Variables&1,440,875&210,076\\
Invocations&591,262&81,515\\
Methods&162,026&29,209\\
Objects&184,352&21,998\\
Classes&16,102&6,972\\
\end{tabular}
\caption{OpenJDK7 b147 and Java Package Size}
\label{tab:problemsizes}
\end{table}

\begin{table*}
\centering
\begin{minipage}[b]{0.99\hsize}
\footnotesize
\begin{center}
\begin{tabular}{l|r|r|r|r|r|r|r|r|r}
& \multicolumn{3}{c|}{Run-time (s)}
& \multicolumn{3}{c|}{Memory (GByte)}
& \multicolumn{3}{c}{Index Inserts (Mega)} 
\\
\multicolumn{1}{c|}{Analysis} 
& 
\multicolumn{1}{c|}{Auto} & 
\multicolumn{1}{c|}{Na\"ive} & 
\multicolumn{1}{c|}{Ratio}
& 
\multicolumn{1}{c|}{Auto} & 
\multicolumn{1}{c|}{Na\"ive} & 
\multicolumn{1}{c|}{Ratio}
& 
\multicolumn{1}{c|}{Auto} & 
\multicolumn{1}{c|}{Na\"ive} & 
\multicolumn{1}{c}{Ratio}
\\
\hline
ci-ncfg &2.34  &3.72  &1.59&1.50 &1.94 &1.29&27.07 &52.5 &1.94 \\
ci-pts  &14.87 &17.42 &1.17&2.40 &3.13 &1.30&31.73  &51.05 &1.61 \\
cs-pts  &152.22&444.23&2.92&23.56&49.45&2.10&512.97  &971.10 &1.89 \\
security&17.31 &19.48 &1.13&2.77 &3.47 &1.25&45.01  &65.05 &1.45 \\
\hline
\end{tabular}
\end{center}
\caption{Java Package of OpenJDK7 b147}
\label{tab:small}
\end{minipage} \\
\centering
\begin{minipage}[b]{0.99\hsize}
\centering
\footnotesize
\begin{center}
\begin{tabular}{l|r|r|r|r|r|r|r|r|r}
& \multicolumn{3}{c|}{Run-time (s)}
& \multicolumn{3}{c|}{Memory (GByte)}
& \multicolumn{3}{c}{Index Inserts (Mega)}
\\
\multicolumn{1}{c|}{Benchmark} 
& 
\multicolumn{1}{c|}{Auto} & 
\multicolumn{1}{c|}{Na\"ive} & 
\multicolumn{1}{c|}{Ratio}
& 
\multicolumn{1}{c|}{Auto} & 
\multicolumn{1}{c|}{Na\"ive} & 
\multicolumn{1}{c|}{Ratio}
& 
\multicolumn{1}{c|}{Auto} & 
\multicolumn{1}{c|}{Na\"ive} & 
\multicolumn{1}{c}{Ratio} 
\\
\hline
ci-ncfg&64.32&140.30&2.18&31.14&43.55&1.40 &1447.65 &2882.13 &1.99 \\
ci-pts&353.55&656.04&1.86&26.57&42.97&1.62 &1020.39 &1689.23 &1.66 \\
cs-pts&24248.60&n/a&n/a&825.77&n/a&n/a     &19578.90 &39874.23 &2.04 \\
security&52025.00&n/a&n/a&75.30&n/a&n/a    &15.33 &18.44 &1.2 \\\hline
\end{tabular}
\end{center}
\caption{\footnotesize OpenJDK 7 b147 dataset}
\label{tab:large}
\end{minipage}
\end{table*}

The benchmark results for the small data and large data sets are shown in Table~\ref{tab:small} 
and Table~\ref{tab:large}, respectively. 

For the large benchmark, the programs 
are executed in parallel using substantial amounts of memory and run-time, i.e., 826GB and 14.5 hours. Note 
that the  na\"ive approach was not computable for the benchmark ``cs-pts'' and ``security'' because of lack of 
memory and/or running into computation limits. The work of the na\"ive approach is between 1.2 and 2.04 times 
more than our new technique that results in speedups between 1.86 and 2.18 depending on the benchmark. For the 
small dataset all static program analyses are computable. The speedup of our new technique is 
up to 2.92 and up to 2.1 times less memory is used. 

The JDK experiments resulted in a maximum 
memory reduction of 2.1 times less memory and a maximum 
speedup of 2.92. As noted before, the OpenJDK 7 data set resulted 
in a timeout for the context-insensitive and security analysis with 
the na\"ive technique while our technique enabled the JDK library to 
be processed. The memory improvement is due to minimizing redundant 
index data structures. We attribute the run-time performance to index
maintenance costs. Moreover, we highlight that the times achieved using 
our approach are on a par with the results in~\cite{OOPSLA15} using the same 
data set and analysis. On the java-points-to 
analysis used for the DaCapo benchmarks our approach managed to provide speed-ups and 
memory reductions of approx. 20\%.

Overall, the experiments show that our technique of automatically 
generating minimal indices significantly improves both memory 
usage and run-time in the resultant analyzer. It is no surprise that 
our technique is more effective on larger, more complex analyses such as the 
ones in the JDK benchmarks, as such analyses provide more opportunities for 
index reductions. Such analyses are in fact the main motivation of our technique, 
in other words, our main motivation is to enable 
Datalog-based analysis on large industrial sized
analysis and data sets, that are typically deemed too difficult for other Datalog-based 
tools. On the other hand, our approach on the relatively small java-points-to 
analysis still managed to provide significant speed-ups and 
memory reductions. While not on the scale of the large scale analysis improvements, 
we believe this result still has merit and is a notable improvement nevertheless.

\section{Related Work}
\label{sec:relatedwork}

\subsection{Datalog Engines}
Datalog has been pro-actively researched 
in several computer science 
communities~\cite{Ceri.GT.89,Ramakrishnan.SS.92,Ramamohanarao.Harland.94,Ramakrishnan.Ullman.95}. For 
a comprehensive introduction to Dataog we recommend~\cite{Abiteboul95}. Recently, 
Datalog has regained considerable interest, driven by different applications 
including data integration, networking, and program analysis. A survey that includes these developments 
was published recently by~\cite{Green.HLZ.13}. There is likewise a large body of work on 
Datalog evaluation and Datalog compilation. We refer the reader to the related work of~\cite{CC} for a 
general overview of bottom-up engines. Here, we mention some notable 
state-of-the-art Datalog engines with ordered data structures, namely, LogicBlox~\cite{Green.AK.12-logicblox} and 
bddbddb~\cite{Whaley05a}. Unlike bddbddb we do not create a global variable order but a per relation order based 
on a nested loop join, thus solving a polynomial-time problem. 

We also believe our MOSP solution can be beneficial to other Datalog engines, even query engines that 
interpret the Datalog rule can perform the MOSP computation online, since the overhead for the computation
is very small we believe this should not greatly affect performance while potentially improving the 
evaluation significantly. Moreover, we believe that our approach could be in special cases 
applicable to general query engines that may not have Datalog as a front-end language. For example, our 
approach could work for bottom-up engines that use SQL defined queries. 

Comparisons to LogicBlox: As LogicBlox has traditionally been the default engine of Doop, we can foresee 
potential interest in comparisons between LogicBlox and \souffle on Doop benchmarks. In this paper we have 
omitted such a comparison due to a recent in-depth paper by the authors if Doop~\cite{doopsouffle}. This article  
gives an extensive overview of the differences between the two engines and provides an indepth
comparisson of \souffle (with auto-index turned on by default) and LogicBlox. The article demonstrates 
significant improved run-times, for which believe our technqiue plays a significant factor. Our 
own experiments did not demonstrate any significant deviation from their results in \cite{doopsouffle}. We only 
point out that in our experiments \souffle had lower memory usages compared to LogicBlox.

\subsection{Join Algorithms}
Join computations are vital for fast Datalog execution. As index selection and join order scheduling 
are intertwined for query optimization, many state-of-the-art Datalog engines have separated the two 
for practical reasons. For example, engines such as Socialite~\cite{Socialite} and \souffle~\cite{CAV} allow users 
to specify the join order. For such engines our technique can be used. Other Datalog engines such 
as Logicblox~\cite{logicblox} use a leapfrog join that, while alleviating users from specifying join order, 
require users to specify indices manually. Our technique rests on the observation that large Datalog programs 
usually comprises of significantly less rules than relations. Moreover, of all these rules,
usually only a few require manual loop scheduling. These can be identified using a profiler\footnote{\souffle profiler for example}, 
or alternatively, loop schedules can be automated using heuristic techniques~\cite{Selingers}. Therefore, our preference is to 
fix loop orders rather then indices for a better user experience. The preference for this design choice has been 
confirmed from the \souffle user community.

\subsection{Index Selection}
In the context of traditional relational databases, the general problem of automatically selecting indices for a set of 
queries, referred to in the literature as the index selection problem (ISP)~\cite{Schkolnick1975,Comer1978,Ip, Genetic}, is 
well studied and has been shown to be NP-hard~\cite{Piatetsky-Shapiro83}. State-of-the-art 
techniques typically formulate ISP as a variation of the $0$-$1$ knapsack problem. Knapsack formulations 
of ISP assume a relational database setup, i.e., relations are stored in secondary storage and queries 
are executed on normalized data. Indices are ``packed'' by balancing the overall execution time of queries for 
an index configuration (i.e., a subset of indices that influence the performance of a query), and the costs of index 
maintenance.  All possible index configurations are reflected in form of $0-1$ decision variables. Constraints in 
the index selection problem ensure that at most one index configuration is chosen for a query, and other constraint 
forces an index to be packed if the index occurs in a selected index configuration of a query. The objective function 
of ISP is the profit of selected index configuration of queries minus the cost of maintaining selected indices. Note 
that if the relations are normalized, the set of indexes will be small in relational databases, and hence ISP as a 
combinatorial optimisation problem becomes viable.

The standard ILP model for 
ISP may be expressed as the following ILP model below:
\begin{align*}
\max &= \sum_{q \in Q} \sum_{k \in K} g_{qk} x_{qk}- \sum_{i \in I} f_i y_i \hspace*{-3cm} \\
\textit{s.t.} & \sum_{k \in K} x_{qk} \leq 1, & \forall  k \in K\\
&x_{qk} \leq y_i,  & \forall  q \in Q, k \in K, i \in I: i \in k \\
&x_{qk}, y_i \in \{0,1\}, & \forall q \in Q, k \in K, i \in I \\
\end{align*}
Here $I$ be the set of indices, $Q$ the
set of all queries, and $K$ are the index configurations. An index configuration
$k$ is a subset of the index set $I$. 
The $0$-$1$ decision variable $x_{qk}$  determines whether configuration $k$ for query  $q$ is selected, and the 
$0$-$1$ variable $y_i$ determines whether index $i$ is selected. The first constraints of the ILP model ensures 
that at most one index configuration is chosen for a query $q$, and the second constraint forces variable $y_i$ to 
be set to one if there exists a decision variable $x_{qk}$ whose configuration $k$ contains index $i$. 
The objective function is the profit of the configurations over all queries minus the cost of maintaining indices 
where $g_{qk}$ is the profit of index configuration $k$ for  query $q$,  and $f_i$ are the maintenance costs for 
index $i$. 

Automated index selection was 
studied in the context of self-tuning 
database systems \cite{Chaudhuri07}. As choosing an optimal index
configuration is NP-complete in the general case \cite{Piatetsky-Shapiro83},
the mechanisms that were introduced starting in the late 90s used the
existing cost model and some heuristics to select/propose indices for
a workload. Index advisors were implemented in all major relational
database systems \cite{Chaudhuri97,Valentin00,Dageville04}. Other models uses the 
Knapsack problem as a vehicle for finding an optimal solution~\cite{Ip} and more exotic approaches use 
genetic algorithms~\cite{Genetic} to find sub-optimal indices. The work of \cite{Gundem1999111} proposes an 
extension of prior index optimisation models where there are multiple-candidates available
for attributes, and G\"{u}ndem shows that the optimisation model is 
NP-hard, and provide an approximation algorithm which is bounded
by a logarithmic time order. Older works~\cite{Schkolnick1975,Comer1978} provide
some probabilistic modelling for selecting secondary indices and some 
ad-hoc approaches. A related problem is the index selection for views for example in 
OLAP domains~\cite{Gupta97}. 
Our index selection problem differs from the classic ISP literate and to the best of our knowledge is the 
first formulation of such as problem. In our case, we are restricted to support primitive searches only. Primitive 
searches occur in equi-joins and simple value queries. We further have a maximal index assumption, i.e., each primitive search is
covered by at least one index. This assumptions is important for high-performance
systems which needs to accelerate all searches and the question is to minimise
the indices to cover all searches.  Since our problem differs to existing approaches, we obtain an optimal 
algorithm that exhibits a polynomial worst-case execution time. The effect of our 
optimisation reduces the cost of maintaining indices. It relies on a pre-determined clause body order and assumes all relations are 
indexed in memory. With modern memory systems this assumption becomes very much feasible and leads to large performance 
improvements. Secondly, the nature of Datalog restricts search predicate of primitive searches, i.e., the 
search predicate has to be an equality predicate over the attributes 
of the relation. The traditional formulations of the model do not capture this restrictions as they are too 
coarse-grained in a mathematical sense. While not related to index selection the approach in~\cite{Jag90} 
uses Dilworths theorem as a fast reachibility algorithm to process graph reachability queries.


\section{Conclusion}
\label{sec:conclusion}
We have presented an join computation technique that computes the minimal set of indices a given relation. The 
proposed algorithm runs in polynomial time due to exploiting a relationship
to Dilworth's problem: Instead of finding indices directly, and searching in a double-exponential space for finding them, we construct a partial order over the
primitive searches. A minimum chain cover in the partial order over the primitive 
searches construct the minimal set of indices, indirectly.
We have 
demonstrated the feasibility of our approach through experiments using \souffle, a Datalog-based 
static analysis tool. In future, we plan to explore the possibility of adapting our technique to general in-memory relational database engines, as well. 

\balance

\section{Acknowledgments}
We would like to thank Oracle Labs, Alan Fekete, Byron Cook, Chenyi Zhang and all our anonymous 
reviewers.

\bibliographystyle{abbrv}
\bibliography{main}

\begin{thebibliography}{10}

\bibitem{Abiteboul95}
S.~Abiteboul, R.~Hull, and V.~Vianu.
\newblock {\em Foundations of Databases}.
\newblock Addison-Wesley, 1995.

\bibitem{doopsouffle}
T.~Antoniadis, K.~Triantafyllou, and Y.~Smaragdakis.
\newblock Porting doop to souffl\&\#xe9;: A tale of inter-engine portability
  for datalog-based analyses.
\newblock In {\em Proceedings of the 6th ACM SIGPLAN International Workshop on
  State Of the Art in Program Analysis}, SOAP 2017, pages 25--30, New York, NY,
  USA, 2017. ACM.

\bibitem{Bierhoff05}
K.~Bierhoff.
\newblock Alias analysis with bddbddb, star project report, 17-754 analysis of
  software artifacts.
\newblock
  \url{www.cs.cmu.edu/~aldrich/courses/654/tools/bierhoff-bddbddb-05.pdf},
  2005.

\bibitem{DaCapo}
S.~M. Blackburn, R.~Garner, C.~Hoffman, A.~M. Khan, K.~S. McKinley, R.~Bentzur,
  A.~Diwan, D.~Feinberg, D.~Frampton, S.~Z. Guyer, M.~Hirzel, A.~Hosking,
  M.~Jump, H.~Lee, J.~E.~B. Moss, A.~Phansalkar, D.~Stefanovi\'{c},
  T.~{VanDrunen}, D.~von Dincklage, and B.~Wiedermann.
\newblock The {DaCapo} benchmarks: {J}ava benchmarking development and
  analysis.
\newblock In {\em OOPSLA '06: Proceedings of the 21st annual ACM SIGPLAN
  conference on Object-Oriented Programing, Systems, Languages, and
  Applications}, pages 169--190, New York, NY, USA, Oct. 2006. ACM Press.

\bibitem{SKP15}
B.Scholz, K.Vorobyov, P.Krishnan, and T.Westmann.
\newblock A datalog source-to-source translator for static program analysis: An
  experience report.
\newblock {\em Australasian Software Engineering Conference}, 2015.

\bibitem{Ceri89}
S.~Ceri, G.~Gottlob, and L.~Tanca.
\newblock What you always wanted to know about datalog (and never dared to
  ask).
\newblock {\em IEEE Trans. on Knowl. and Data Eng.}, 1(1):146--166, Mar. 1989.

\bibitem{Ceri.GT.89}
S.~Ceri, G.~Gottlob, and L.~Tanca.
\newblock What you always wanted to know about datalog (and never dared to
  ask).
\newblock {\em IEEE Transactions on Knowledge and Data Engineering},
  1(1):146--166, 1989.

\bibitem{Chau97}
S.~Chaudhuri and V.~Narasayya.
\newblock An efficient, cost-driven index selection tool for microsoft sql
  server.
\newblock Very Large Data Bases Endowment Inc., August 1997.

\bibitem{Chaudhuri07}
S.~Chaudhuri and V.~Narasayya.
\newblock Self-tuning database systems: A decade of progress.
\newblock In {\em Proceedings of the 33rd International Conference on Very
  Large Data Bases}, VLDB '07, pages 3--14. VLDB Endowment, 2007.

\bibitem{Chaudhuri97}
S.~Chaudhuri and V.~R. Narasayya.
\newblock An efficient cost-driven index selection tool for microsoft sql
  server.
\newblock pages 146--155, Athens, Greece, 1997. Morgan Kaufmann.

\bibitem{OracleSec}
C.~Cifuentes, A.~Gross, and N.~Keynes.
\newblock Understanding caller-sensitive method vulnerabilities: a class of
  access control vulnerabilities in the java platform.
\newblock In A.~M{\o}ller and M.~Naik, editors, {\em Proceedings of the 4th
  {ACM} {SIGPLAN} International Workshop on State Of the Art in Program
  Analysis, SOAP@PLDI 2015, Portland, OR, USA, June 15 - 17, 2015}, pages
  7--12. {ACM}, 2015.

\bibitem{Codd70}
E.~F. Codd.
\newblock A relational model of data for large shared data banks.
\newblock {\em Commun. ACM}, 13(6):377--387, June 1970.

\bibitem{Comer1978}
D.~Comer.
\newblock The difficulty of optimum index selection.
\newblock {\em ACM Trans. Database Syst.}, 3(4):440--445, Dec. 1978.

\bibitem{Dageville04}
B.~Dageville, D.~Das, K.~Dias, K.~Yagoub, M.~Za{\"{\i}}t, and M.~Ziauddin.
\newblock Automatic {SQL} tuning in oracle 10g.
\newblock pages 1098--1109, Toronto, Canada, September 2004. Morgan Kaufmann.

\bibitem{OOPSLA15}
J.~Dietrich, N.~Hollingum, and B.~Scholz.
\newblock Giga-scale exhaustive points-to analysis for java in under a minute.
\newblock In J.~Aldrich and P.~Eugster, editors, {\em Proceedings of the 2015
  {ACM} {SIGPLAN} International Conference on Object-Oriented Programming,
  Systems, Languages, and Applications, {OOPSLA} 2015, part of {SPLASH} 2015,
  Pittsburgh, PA, USA, October 25-30, 2015}, pages 535--551. {ACM}, 2015.

\bibitem{Dilworth50}
R.~{Dilworth}.
\newblock {A decomposition theorem for partially ordered sets.}
\newblock {\em {Ann. Math. (2)}}, 51:161--166, 1950.

\bibitem{Fulkerson56}
D.~R. Fulkerson.
\newblock Note on dilworth's decomposition theorem for partially ordered sets.
\newblock {\em Proceedings of the American Mathematical Society}, 7(4):pp.
  701--702, 1956.

\bibitem{Green.AK.12-logicblox}
T.~J. Green, M.~Aref, and G.~Karvounarakis.
\newblock {LogicBlox}, platform and language: A tutorial.
\newblock In {\em Datalog in Academia and Industry - Second International
  Workshop}, volume 7494 of {\em Lecture Notes in Computer Science}, pages
  1--8. Springer, September 2012.

\bibitem{Green.HLZ.13}
T.~J. Green, S.~S. Huang, B.~T. Loo, and W.~Zhou.
\newblock Datalog and recursive query processing.
\newblock {\em Foundations and Trends in Databases}, 5(2):105--195, 2013.

\bibitem{Groff09}
J.~Groff and P.~Weinberg.
\newblock {\em SQL The Complete Reference, 3rd Edition}.
\newblock McGraw-Hill, Inc., New York, NY, USA, 3 edition, 2010.

\bibitem{Gupta97}
H.~Gupta, V.~Harinarayan, A.~Rajaraman, and J.~D. Ullman.
\newblock Index selection for olap.
\newblock pages 208--219, 1997.

\bibitem{Gundem1999111}
T.~Gündem.
\newblock Near optimal multiple choice index selection for relational
  databases.
\newblock {\em Computers and Mathematics with Applications}, 37(2):111 -- 120,
  1999.

\bibitem{Hoder11}
K.~Hoder, N.~Bj{\o}rner, and L.~De~Moura.
\newblock {\it Z}: An efficient engine for fixed points with constraints.
\newblock In {\em Proceedings of the 23rd international conference on Computer
  aided verification}, CAV'11, pages 457--462, Berlin, Heidelberg, 2011.
  Springer-Verlag.

\bibitem{Ip}
M.~Ip, L.~Saxton, and V.~Raghavan.
\newblock On the selection of an optimal set of indexes.
\newblock {\em Software Engineering, IEEE Transactions on}, SE-9(2):135--143,
  March 1983.

\bibitem{Jag90}
H.~V. Jagadish.
\newblock A compression technique to materialize transitive closure.
\newblock {\em ACM Trans. Database Syst.}, 15(4):558--598, Dec. 1990.

\bibitem{CAV}
H.~Jordan, B.~Scholz, and P.~Subotic.
\newblock Souffl{\'{e}}: On synthesis of program analyzers.
\newblock In S.~Chaudhuri and A.~Farzan, editors, {\em Computer Aided
  Verification - 28th International Conference, {CAV} 2016, Toronto, ON,
  Canada, July 17-23, 2016, Proceedings, Part {II}}, volume 9780 of {\em
  Lecture Notes in Computer Science}, pages 422--430. Springer, 2016.

\bibitem{Genetic}
J.~Kratica, I.~Ljubic, and D.~To\v{s}ic.
\newblock A genetic algorithm for the index selection problem.
\newblock In {\em Proceedings of the 2003 International Conference on
  Applications of Evolutionary Computing}, EvoWorkshops'03, pages 280--290,
  Berlin, Heidelberg, 2003. Springer-Verlag.

\bibitem{logicblox}
I.~LogicBlox.
\newblock Declartive cloud platform for applications that combine transactions
  \& analytics.
\newblock \url{http://www.logicblox.com}.

\bibitem{GlueNail}
G.~Phipps, M.~A. Derr, and K.~A. Ross.
\newblock Glue-nail: A deductive database system.
\newblock In {\em Proceedings of the 1991 ACM SIGMOD International Conference
  on Management of Data}, SIGMOD '91, pages 308--317, New York, NY, USA, 1991.
  ACM.

\bibitem{Piatetsky-Shapiro83}
G.~Piatetsky-Shapiro.
\newblock {The Optimal Selection of Secondary Indices is NP-complete}.
\newblock {\em SIGMOD Rec.}, 13(2):72--75, Jan. 1983.

\bibitem{Pijls2013}
W.~Pijls and R.~Potharst.
\newblock Another note on dilworth's decomposition theorem.
\newblock {\em Journal of Discrete Mathematics}, 2013:4, 2013.

\bibitem{Ramakrishnan.SS.92}
R.~Ramakrishnan, D.~Srivastava, and S.~Sudarshan.
\newblock Efficient bottom-up evaluation of logic programs.
\newblock In P.~Dewilde and J.~Vandewalle, editors, {\em Computer Systems and
  Software Engineering}, pages 287--324. Springer US, 1992.

\bibitem{Coral}
R.~Ramakrishnan, D.~Srivastava, S.~Sudarshan, and P.~Seshadri.
\newblock Implementation of the coral deductive database system.
\newblock {\em SIGMOD Rec.}, 22(2):167--176, June 1993.

\bibitem{Ramakrishnan.Ullman.95}
R.~Ramakrishnan and J.~D. Ullman.
\newblock A survey of deductive database systems.
\newblock {\em Journal of Logic Programming}, 23(2):125--149, 1995.

\bibitem{Ramamohanarao.Harland.94}
K.~Ramamohanarao and J.~Harland.
\newblock An introduction to deductive database languages and systems.
\newblock {\em Journal of Very Large Data Bases}, 3(2):107--122, 1994.

\bibitem{Schkolnick1975}
M.~Schkolnick.
\newblock The optimal selection of secondary indices for files.
\newblock {\em Information Systems}, 1(4):141 -- 146, 1975.

\bibitem{CC}
B.~Scholz, H.~Jordan, P.~Subotic, and T.~Westmann.
\newblock On fast large-scale program analysis in datalog.
\newblock In A.~Zaks and M.~V. Hermenegildo, editors, {\em Proceedings of the
  25th International Conference on Compiler Construction, {CC} 2016, Barcelona,
  Spain, March 12-18, 2016}, pages 196--206. {ACM}, 2016.

\bibitem{Selingers}
P.~G. Selinger, M.~M. Astrahan, D.~D. Chamberlin, R.~A. Lorie, and T.~G. Price.
\newblock Access path selection in a relational database management system.
\newblock In {\em Proceedings of the 1979 ACM SIGMOD International Conference
  on Management of Data}, SIGMOD '79, pages 23--34, New York, NY, USA, 1979.
  ACM.

\bibitem{Socialite}
J.~Seo, J.~Park, J.~Shin, and M.~S. Lam.
\newblock Distributed socialite: A datalog-based language for large-scale graph
  analysis.
\newblock {\em Proc. VLDB Endow.}, 6(14):1906--1917, Sept. 2013.

\bibitem{Doop}
Y.~Smaragdaiks, M.~Bravenboer, and G.~Kastrinis.
\newblock Doop: A framework for java pointer analysis.
\newblock \url{http://doop.program-analysis.org/}.

\bibitem{Julia}
F.~Spoto.
\newblock {\em The Julia Static Analyzer for Java}, pages 39--57.
\newblock Springer Berlin Heidelberg, Berlin, Heidelberg, 2016.

\bibitem{Valentin00}
G.~Valentin, M.~Zuliani, D.~C. Zilio, G.~M. Lohman, and A.~Skelley.
\newblock {DB2} advisor: An optimizer smart enough to recommend its own
  indexes.
\newblock In {\em {ICDE}}, pages 101--110, 2000.

\bibitem{Whaley05a}
J.~Whaley, D.~Avots, M.~Carbin, and M.~S. Lam.
\newblock Using datalog with binary decision diagrams for program analysis.
\newblock In {\em Proceedings of the Third Asian conference on Programming
  Languages and Systems}, APLAS'05, pages 97--118, Berlin, Heidelberg, 2005.
  Springer-Verlag.

\end{thebibliography}

\begin{appendix}
\section{Proofs}
\subsection{Proofs from Section~\ref{sec:cover}}
\begin{proof}[Lemma~\ref{lemma:rangequerycover}]
Let us assume a index of size $k$ i.e.,  $\ell = x_1 \prec \dots \prec x_k$ and a relation $R$ of a fixed size.
By induction on size of $\ell$ we have:
Base case (size $1$) : it is trivial that $$ \{ t \in R \ | \ t(x_1) = v_1 \} = \{ t \in R \ | \ lb(v_1) \sqsubseteq_{x_1} t \sqsubseteq_{x_1} ub(v_1) \}.$$
Now we construct the IH (size is $n$) we hence 
\begin{align*}
&\{ \ t \in R \ | \ t(x_1) = v_1 \wedge \dots \wedge t(x_n) = v_n \} = \\
&\{ \ t \in R \ | \ lb(v_1, \dots, v_n) \sqsubseteq_{\ell} t \sqsubseteq_{\ell} ub(v_1, \dots, v_n) \} \text{ where } \\
&\ell = x_1 \prec \dots \prec x_n
\end{align*}

Assuming the IH, and letting $k = n + 1$ we have

\begin{align*}
&\{ \ t \in R \ | \ t(x_1) = v_1 \wedge \dots \wedge t(x_n) = v_n \wedge t(x_{n+1}) \} = \\
&\{ \ t \in R \ | \ lb(v_1, \dots, v_n, v_{n+1}) \sqsubseteq_{\ell} t \sqsubseteq_{\ell} ub(v_1, \dots, v_n, v_{n+1}) \}.
\end{align*}

We can rewrite this as: 

\begin{align*}
& \{ \ t \in \sigma_{x_{n+1}}(R) \ | \ t(x_1) = v_1 \wedge \dots \wedge t(x_n) = v_n \} = \\
& \{ \ t \in \sigma_{\rho(x_{n+1}, lb(v_{n+1}), ub(v_{n+1}))}(R)  \ | \\
& \ lb(v_1, \dots, v_n) \sqsubseteq_{x_{n+1}} t \sqsubseteq_{x_{n+1}} ub(v_1, \dots, v_n) \}.  
\end{align*}
Hence by IH and the base case it holds.\\

If $\ell$ is $x_1 \prec \dots \prec x_k \prec x_s$ the property still holds as the bounds for the $x_s$ are set to the 
infima ($\bot$) and suprema ($\top$) for $a$ and $b$, respectively. Hence retaining the property.
\end{proof}

\subsection{Proofs from Section~\ref{sec:mosp}}

\subsubsection{Lemma Proofs}

\begin{proof}[of Lemma~\ref{lem:cardinality}]
We can bound the cardinality of the set of all possible sequences
as follows, 
\begin{align*}
m! = |\perm(A)| & \leq \left | \bigcup_{X \subseteq A, X\not=\emptyset} \perm(X) \right | \\
   &=  \sum_{X \subseteq A, X\not=\emptyset} |X|! \\ 
  &= \sum_{1\leq i \leq m} \binom{m}{i} i!   \\
  &= m! \sum_{1\leq i \leq m} \frac{1}{(m-i)!} \\
  &= m! \sum_{0 \leq i \leq m-1} \frac{1}{i!}  \\
  &\leq m! \sum_{ i \geq 0 } \frac{1}{i!} \\
  & =  \euler \cdot m! 
\end{align*}
The lower bound is given by the $m!=|\perm(X)|$ since $\perm(X)\subseteq
\lexset$.  For the upper bound, we sum up the cardinalities of
permutations over all subsets except the empty set.  We reorder the sum
by summing up the permutations of subsets that have cardinality $i$, i.e., there are $\binom{\tl}{i}$ subsets of cardinality $i$. By simplifying the binom and factoring out
$m!$ we can rearrange the summation such that the index runs from $0$
to $\tl-1$.  Since the numbers of the series $\frac{1}{i!}$ are
positive and converge, we can extend the range to infinity to obtain
an upper bound and the sum converges to the Euler number. Hence,
$\euler \cdot \tl!$ is an upper bound on the number of sequences
for $m$ attributes.
\end{proof}

\begin{proof}[of Lemma~\ref{lemma:chaincard}]
By construction
\end{proof}

\begin{proof}[of Lemma~\ref{lemma:set-size}]
By contradiction. \emph{(Case 1).} Assume $|\alpha_{\searchset}^{1}(L)|>|L|.$
This is not possible because function $\alpha_{\searchset}^{0}$ maps one index
to exactly one chain. Hence, due to the pigeonhole principle there
can be at most $|L|$ different chains in $\alpha_{\searchset}^{1}(L)$. \emph{(Case
2).} Assume $|\alpha_{\searchset}^{1}(L)|<|L|.$ Hence there must exist $l_{1}\not=l_{2}\in L$,
such that $\alpha_{\searchset}^{0}(l_{1})=\alpha_{\searchset}^{0}(l_{2})$, and $\alpha_{\searchset}^{1}(L\setminus\{l_{2}\})=\alpha_{\searchset}^{1}(L)$.
However, due to Lemma~\ref{lemma:gamma-cover}, $\textbf{c-cover}(\alpha_{\searchset}^{1}(L))\Rightarrow\textbf{c-cover}_{\searchset}(\alpha_{\searchset}^{1}(L\setminus\{l_{2}\})\Rightarrow\textbf{l-cover}_{\searchset}(L\setminus\{l_{2}\})$.
Hence, $L\not\in f(\searchset)$ because $|L|$ is not minimal, i.e., $L\setminus\{l_{2}\}$
represents a smaller solution for which $\textbf{l-cover}(L\setminus\{l_{2}\})$
holds. As a consequence of \emph{(Case 1)} and \emph{(Case 2),} the
lemma holds. 
\end{proof}

\begin{proof}[of Lemma~\ref{lemma:cover}]
By definition. 
\begin{eqnarray*}
\textbf{l-cover}_{\searchset}(L) & \Leftrightarrow & \forall s\in \searchset:\exists l\in L:\textbf{prefix}_{|s|}(l)=s\\
 & \Leftrightarrow & \forall s\in \searchset:\exists l\in L:s\in\alpha_{\searchset}^{0}(l)\\
 & \Leftrightarrow & \forall s\in \searchset:\exists c\in\alpha_{\searchset}^{1}(L):s\in c\\
 & \Leftrightarrow & \textbf{c-cover}(\alpha_{\searchset}^{1}(L))
\end{eqnarray*}
\end{proof}

\begin{proof}[of Lemma~\ref{lemma:gamma-cover}]
By definition. 
\begin{eqnarray*}
 & &\textbf{c-cover}_{\searchset}(C)   \Rightarrow \forall \search\in \searchset:\exists c\in C:\search\in c\\
 & \Rightarrow&  \forall L \in \gamma_{\searchset}^{1}(C) : \forall \search\in \searchset:\exists c\in C:\search\in c \\
 & \Rightarrow&  \forall L \in \{ \{\ell_1, \dots, \ell_k \} \ \mid \ \ell_1 \in \gamma_{\searchset}^0(c_1), \dots, \ell_k \in \gamma_{\searchset}^0(c_k) \} : \\ 
 & & \forall s \in Q : \exists c \in C : s\in c\\
 & \Rightarrow&  \forall L \in \{ \{\ell_1, \dots, \ell_k \} \ \mid \ c_1 \subseteq \alpha_{\searchset}^0(\ell_1), \dots, c_k \subseteq \alpha_{\searchset}^0(\ell_k) \} : \\ & & \forall s \in \searchset : \exists c \in C : s\in c\\
 & \Rightarrow&   \forall L \in \gamma_{\searchset}^{1}(C) : \forall s\in \searchset:\exists l\in L:s\in \alpha_{\searchset}^{0}(\ell) \\
 & \Rightarrow&  \forall L\in\gamma_{\searchset}^{1}(C):\textbf{l-cover}_{\searchset}(L)
\end{eqnarray*}
\end{proof}

\subsubsection{Theorem Proofs}

\begin{proof}[of Theorem~\ref{thm:alpha-optimal}]
If $L \in f_{\searchset}$ then $L$ is a l-cover and is minimal. By Lemma~\ref{lemma:set-size}
and Lemma~\ref{lemma:cover}, cardinality and cover is preserved. Hence we have a chain 
$C\equiv\alpha_{\searchset}^{1}(L)$ that is a c-cover and minimal due to the fact that if it was not 
minimal, there would be an $|L'| < |L|$ by Lemma~\ref{lemma:set-size}. However, this would violate 
our assumption that L is in $f_{\searchset}$.
\end{proof}

\begin{proof}[of Theorem~\ref{thm:main2}]
We instantiate $C_i$ and $L_i$ from the domains of $g_{\searchset}$ and $\gamma_{\searchset}^1(C_i)$, 
respectively. Given that \textbf{c-cover}$_{\searchset}$($C_i$) holds and that $\forall C \in g_{\searchset} / C_i : |C| = |C_i|$, 
we want to show \textbf{l-cover}($L_i$) holds and that $\forall L \in \gamma_{\searchset}^1(C_i) / L_i : |L| = |L_i|$. 
The proof of \textbf{l-cover}$_{\searchset}$($L_i$) follows by Lemma~\ref{lemma:gamma-cover}. Showing minimality 
follows from Lemma~\ref{lemma:chaincard} and the observation that $\gamma_{\searchset}^1(C)$, where $|C| = k$, 
creates k-size upper-bound index sets. Therefore,  $|L_i| \leq |C| = k$.  We also know by Lemma~\ref{lemma:set-size} 
that $C' \equiv |\alpha_{\searchset}^{1}(L_i)|$ and $|C'| = |L_i|$. But $C'$ can't be smaller than $C$ as $C$ is in 
$g_{\searchset}$ (min chains) hence $|C| = |L_i|$ and $L_i$ must be minimal and an l-cover.  We therefore conclude it is 
in $f_{\searchset}$
\end{proof}

\end{appendix}

\end{document}